\newcommand{\footremember}[2]{%
    \footnote{#2}
    \newcounter{#1}
    \setcounter{#1}{\value{footnote}}%
}
\newtheorem{theorem}{Theorem}
 \newtheorem{lemma}[theorem]{Lemma}
 \newtheorem{definition}[theorem]{Definition}
\def\T{\mathcal{T}}
\def\I{\mathcal{I}}
\def\P{\mathcal{P}}
\DeclareMathOperator{\OPT}{OPT}
\DeclareMathOperator{\EXP}{\mathbb{E}}
\date{}
\newcommand{\old}[1]{{{}}}
\newcommand{\riko}[1]{{\color{blue}Riko: #1}}
\renewcommand{\riko}[1]{}
\newcommand{\todoin}[1]{}
\renewcommand{\todo}[1]{}
\date{} 
\title{An Algorithm for Bichromatic Sorting with Polylog Competitive Ratio}
\author{Mayank Goswami\footremember{mayank}{Queens College CUNY, Flushing, New York, USA. Supported by NSF grant CCF-1910873. \texttt{mayank.goswami@qc.cuny.edu}}
\and Riko Jacob\footremember{riko}{IT University of Copenhagen, København S, Denmark. Part of this work done during the second Hawaiian workshop on parallel algorithms and data structures, University of Hawaii at Manoa, Hawaii, USA, NSF Grant CCF-1930579. \texttt{rikj@itu.dk}}}
\begin{document}
\maketitle
\begin{abstract}
The problem of sorting with priced information was introduced by [Charikar, Fagin, Guruswami, Kleinberg, Raghavan, Sahai (CFGKRS), STOC 2000]. In this setting, different comparisons have different (potentially infinite) costs. The goal is to find a sorting algorithm with small competitive ratio, defined as the (worst-case) ratio of the algorithm's cost to the cost of the cheapest proof of the sorted order.

The simple case of bichromatic sorting posed by [CFGKRS] remains open:
We are given two sets $A$ and $B$ of total size $N$, and the cost of an $A-A$ comparison or a $B-B$ comparison is higher than an $A-B$ comparison. The goal is to sort $A \cup B$. An $\Omega(\log N)$ lower bound on competitive ratio follows from unit-cost sorting. Note that this is a generalization of the famous nuts and bolts problem, where $A-A$ and $B-B$ comparisons have infinite cost, and elements of $A$ and $B$ are guaranteed to alternate in the final sorted order.

In this paper we give a randomized algorithm InversionSort with an almost-optimal w.h.p. competitive ratio of $O(\log^{3} N)$. This is the first algorithm for bichromatic sorting with a $o(N)$ competitive ratio. 
\end{abstract}

\section{Introduction and main results}

In their seminal paper ``Query strategies for priced information'', Charikar, Fagin, Guruswami, Kleinberg, Raghavan and Sahai \cite{charikar2002query}  [CFGKRS STOC 2000] study the problem of computing a function $f$ of $n$ inputs, where querying an input has a certain cost associated to it, and one wants to find the cheapest query strategy that computes $f$. The \textit{competitive ratio} is defined as the (worst case) ratio of the cost of the query strategy to the cost of the cheapest proof of $f$. This work initiated a multitude of papers on priced information, studying problems like learning with attribute costs \cite{kaplan2005learning}, stochastic boolean function evaluation \cite{deshpande2014approximation}, searching on trees \cite{onak2006generalization,mozes2008finding}, priced information in external memory \cite{bender2021batched}, and others. 

The problem of \textit{sorting with priced information}, or even a simple bichromatic version of it (stated by (CFGKRS \cite{charikar2002query})), remains tantalizingly open. After describing their main results, (CFGKRS \cite{charikar2002query}) mention in further directions: \textit{"Sorting items when each comparison has a distinct cost appears to be highly non-trivial. Suppose, for example, we construct an instance of this problem by partitioning the items into sets A and B, giving each A-to-B comparison a very low cost, and giving each A-to-A and B-to-B comparison a very high cost. We then obtain a very simple non-uniform cost structure in the spirit of the well-known hard problem of ‘‘sorting nuts and bolts’’"}

We will call the above problem \textit{bichromatic sorting}. Observe that bichromatic sorting is a special case of the  \textbf{general} \textit{sorting with priced information} problem defined as follows. Given a weighted undirected complete graph $G$, with weights $w_{ij} \in \mathbb{R}^{\geq 0} \cup \{\infty\}$ indicating the cost to compare keys (vertices) $x_i$ and $x_j$, find the cheapest sorting algorithm. 
To define the competitive ratio of an algorithm, one first needs to define the cheapest proof, which here is simply the sum of the costs to compare pairs of keys that are adjacent in the final sorted order. Restated, it is the (finite) cost of the directed Hamiltonian path that must be present (if the keys can be sorted) in the underlying DAG $\vec G$ resulting after revealing the directions of all edges in $G$. Note that the same weighted complete graph $G$ can give rise to different DAGs $\vec G$ depending on the values assigned to the keys; we write ${\vec G} \leftarrow G$ if there is a valid assignment of values such that $\vec G$ is obtained from $G$ after revealing all comparisons. The competitive ratio of an algorithm $A$ on a given family $\mathcal{G}$ of weighted complete graphs is then defined as
\[\max_{G \in \mathcal{G}} \max_{{\vec G} \leftarrow G} \frac{\text{Cost}_{G}(A,\vec G)}{\text{Cost}_{G}(H_{\vec G})} ,\]

\noindent where $\text{Cost}_{G}(A,\vec G)$ is the cost of the algorithm $A$ on the input with underlying DAG $\vec G$, and $\text{Cost}_{G}(H_{\vec G})$ is the cost of the Hamiltonian path in $\vec G$.

With this notation, bichromatic sorting corresponds to the family $\mathcal{G}$ consisting of all graphs  $G = (V=R \cup B, E= E_{rr} \cup E_{rb} \cup E_{bb})$, where the vertex set can be partitioned into R (Red) and B (Blue), and the edges are of three types: $E_{rr}$ is the set of red-red edges with cost $\alpha \geq 0$, $E_{rb}$ the red-blue edges with cost $1$, and $E_{bb}$ the blue-blue edges with cost $\beta \geq 0$ (w.l.o.g. we have normalized the bichromatic comparisons to cost $1$). The classic \textbf{unit-cost} setting corresponds to the complete graph with all weights equal to one. The competitive ratio in the unit-cost setting is thus $\Theta(\log n)$\footnote{This is because the Hamiltonian Path costs $n-1$, and comparison-based sorting costs $\Theta(n \log n)$.}. 

This paper addresses the question posed by (CFGKRS \cite{charikar2002query}): \textit{What is the competitive ratio for bichromatic sorting. Is it close to the $\Theta(\log n)$ competitive ratio in the unit-cost setting?}

\subsection{Related Work} Prior to this work, we do not know of any algorithm for bichromatic sorting with a $o(n)$ competitive ratio. Apart from bichromatic sorting, another special case of sorting with priced information was termed generalized sorting by Huang, Kannan, and Khanna~\cite{6108244}. Here all costs are in $\{1,\infty\}$; the input is an undirected graph $G$ on the $n$ vertices representing the keys to be sorted, and the $m$ edges of $G$ indicate the allowed comparisons (with the $\infty$ cost edges missing in $G$, the ``forbidden'' comparisons).
The goal is to sort the input while querying as few edges as possible. 
\cite{6108244} gave the first subquadratic algorithm for this $\{1,\infty\}$ cost setting that has a total query cost of ${\widetilde O}(n^{1.5})$, or equivalently, has a competitive ratio of ${\widetilde O}(\sqrt{n})$. This ratio was recently improved by Kuszmaul and Narayanan \cite{kuszmaulnarayanan} to $O(\sqrt{m/n})$.

Gupta and Kumar~\cite{gupta2001sorting} consider the setting where the cost to compare two keys is well-behaved; specifically, the cost function is monotone in the weight of the two keys being compared, examples being the sum and the product. \cite{6108244} also studied the stochastic version of generalized sorting, when each edge in $G$ exists with probability $p$. Subsequent work considered settings where $G$ is dense \cite{banerjee_et_al:LIPIcs:2016:6044} and the setting where the costs induce a metric space over the elements \cite{purohit2018improving}.

\noindent\textit{Nuts-and-Bolts:} In this problem one is given $n$ nuts (R) and $n$ bolts (B), is only allowed to compare a nut to a bolt (R-B), and is \emph{promised a matching} between the nuts and the bolts. The goal is to find this matching. Note that the result of a comparison can be $<, >$ or $=$. The problem is originally mentioned as an exercise in \cite{rawlins1992compared}, page 293, and a simple Quicksort type algorithm can be shown to solve this problem in $O(n \log n)$ comparisons with high probability: Pick a random nut, compare to all bolts, find the matching bolt, and compare that bolt to all nuts. The problem is now partitioned into two subproblems with the match at the boundary; recurse. In a later work by Alon, Blum, Fiat, Kannan, Naor, Ostrovsky 
\cite{alon1994matching} the authors developed a sophisticated deterministic algorithm in $O(n \text{ polylog } n)$ time, which was then improved to an optimal $O(n \log n)$ by Koml\'{o}s, Ma, Sz\'{e}meredi \cite{komlos1998matching}. Algorithms for nuts-and-bolts problem cannot be used for bichromatic sorting because of the lack of matches.
 
A preliminary version of this paper \cite{goswami2023sorting}, that includes some further results, is published on arxiv.org.
 
\subsection{Our Results}

Let us consider bichromatic sorting with two sets $R$ (Red) of size $n$ and $B$ (Blue) of size $m$, and normalize the cost of $R-B$ comparisons to be $1$. Let $N=n+m$ denote the total size of the two sets.
Let $\alpha$ and $\beta$ denote the cost of $R-R$ and $B-B$ comparisons. Observe that if $\alpha<1<\beta$ or $\beta<1<\alpha$, then this can be solved as a monotonic structured cost~\cite{gupta2001sorting}. The case when $\alpha<1$ and $\beta<1$ is discussed in Appendix~\ref{sec:other_bichromatic}. Thus, the $\alpha>1$ and $\beta>1$ version posed in \cite{charikar2002query}, where bichromatic comparisons are the cheapest, is the most interesting variant of bichromatic sorting, and our focus here.
Surprisingly, no results are known for it, even though this is just ``one step up" from unit-cost sorting! We show
 
\begin{restatable}[Polylog Competitiveness of InversionSort]{theorem}{thmbichromaticinversionsort}\label{thm:bichromaticinversionsort}
    There exists an algorithm InversionSort for bichromatic sorting and a constant $c>0$, such that for every instance~$\I$, the cost of InversionSort on~$\I$, with probability at least $1-1/N$, is at most $c(\log N)^3 H_\I$, where $H_\I$ is the cost of the Hamiltonian.
\end{restatable}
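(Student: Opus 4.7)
The plan is a charging argument in which each comparison made by InversionSort is assigned to a small number of Hamiltonian edges in the final sorted order, combined with a concentration bound on the depth of a randomized recursion. I would first decompose the Hamiltonian cost as $H_\I = H_{RB} + \alpha M_R + \beta M_B$, where $H_{RB}$ counts the $R$-$B$ adjacencies in the sorted order (each costing $1$) and $M_R, M_B$ count consecutive same-color adjacencies. A key structural observation is that the monochromatic Hamiltonian edges partition the sorted order into maximal monochromatic runs, with exactly $H_{RB} + 1$ such runs, and a run of length $\ell$ with per-comparison cost $c \in \{\alpha, \beta\}$ contributes $(\ell - 1)c$ to $H_\I$. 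A comparison-based sort on that run incurs $\Theta(\ell \log \ell \cdot c)$, only a factor $O(\log N)$ above its Hamiltonian contribution.

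Guided by this decomposition, I would analyze InversionSort as a conceptual two-phase process. The \emph{skeleton phase} uses $R$-$B$ comparisons only, aiming to determine, for each red, its rank among the blues (and symmetrically for each blue), and thereby to identify the maximal monochromatic runs. The \emph{refinement phase} then sorts inside each identified run with a standard randomized quicksort using $R$-$R$ or $B$-$B$ comparisons, costing $O(\log N) \cdot (\alpha M_R + \beta M_B) = O(\log N) \cdot (H_\I - H_{RB})$. For the skeleton phase I would set up a randomized pivot recursion that selects random pivots of one color and partitions opposite-color elements via bichromatic comparisons; a standard analysis gives, w.h.p., a recursion tree of depth $O(\log N)$ with per-level work proportional to the number of ``unresolved'' bichromatic order pairs. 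A quicksort-style recurrence then bounds the total skeleton cost by $O((\log N)^2) \cdot H_{RB}$, together with an extra $O(\log N)$ factor to handle pivot choices that give grossly unbalanced splits within a single color, amortized against nearby bichromatic Hamiltonian edges. Summing, the total cost is $O((\log N)^3) H_\I$.

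The main obstacle is the skeleton-phase analysis. Unlike the classical nuts-and-bolts setting, the two colors need not interleave at all, so a random blue pivot can place every red on one side and produce no useful partition among the reds; the bichromatic comparisons done at that level are then ``wasted'' relative to the skeleton's goal. I would address this with a potential function counting unresolved red-blue order pairs and show that, averaged over random pivot choices, each level removes an expected $\Omega(1/\log N)$ fraction of the potential, so that the total skeleton cost is $O(N \polylog N)$ and, after charging, every bichromatic Hamiltonian edge carries only $\polylog(N)$ charges. The high-probability guarantee of $1 - 1/N$ would then follow from Chernoff-style concentration on the depth of the randomized recursion together with a union bound over the $\poly(N)$ internal subproblems.
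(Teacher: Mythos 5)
There is a genuine gap, and it sits exactly at the point you flag as the main obstacle. Your skeleton phase is restricted to $R$--$B$ comparisons and is supposed to \emph{identify} the maximal monochromatic runs at cost $O((\log N)^2)\cdot H_{RB}$ (or at least $O(N\,\polylog N)$). Neither bound is achievable. Identifying the runs requires not just finding inversions but \emph{certifying} that none remain between adjacent buckets, and with bichromatic comparisons alone such a certificate for a red bucket of size $A$ adjacent to a blue bucket of size $B$ requires all $AB$ comparisons: if every red in the bucket is below every blue, transitivity through other bichromatic comparisons can never infer an uncompared pair. Concretely, take $\alpha=\beta=1+\eps$ and the instance in which all $n$ reds precede all $m$ blues. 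Then $H_{RB}=1$ and $H_\I=\Theta(N)$, but your skeleton phase must perform $nm=\Theta(N^2)$ comparisons, so its cost is neither $O((\log N)^2)H_{RB}$ nor $O(\polylog(N)\cdot H_\I)$. No potential-function argument on unresolved red--blue pairs can evade this, because in this instance all $nm$ pairs genuinely must be compared if monochromatic comparisons are forbidden in the phase. (Your refinement phase and the $O(\log N)$ depth concentration are fine and do match pieces of the paper, namely the height bound on the refinement tree and the final within-stripe sort.)

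The paper's algorithm avoids this by \emph{interleaving} monochromatic comparisons into the inversion-search/verification phase: using a cost-balancing trick (one red--red comparison every $\alpha$ rounds, one blue--blue every $\beta$ rounds) it maintains the max/min of a growing random sample in each bucket, and for each pair of adjacent buckets it precomputes the cheapest of four certificates (all $AB$ bichromatic comparisons; max-red plus $B$ bichromatic; min-blue plus $A$ bichromatic; or both extrema plus one comparison) and falls back to that certificate once the randomized search has spent as much as it costs. Since the Hamiltonian restricted to the subinstance is itself a certificate of the fourth type, the chosen certificate is locally charged to the Hamiltonian. The remaining $(\log N)^2$ factor in the paper then comes not from a potential recurrence but from a combinatorial accounting of ``spill-in'' between overlapping subproblems on the backbone, showing that at any time at least a $1/(\log N)^2$ fraction of active subproblems are unaffected and hence locally competitive. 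To repair your proposal you would need to (i) allow monochromatic comparisons in the skeleton phase with some cost-balancing and a cheapest-certificate fallback, and (ii) replace the charge to $H_{RB}$ by a charge to the full local Hamiltonian cost, including its monochromatic edges.
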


Clearly, the $\Omega(\log N)$ lower bound on the competitive ratio carries over from unit-cost sorting. The above result shows that bichromatic sorting is almost as easy as the unit-cost setting. We remark that unlike most algorithms for generalized sorting (costs in $\{1,\infty\}$) that have a high polynomial runtime, InversionSort runs in $O(N^2)$ time. Further, we sketch (Appendix~\ref{sec:other_bichromatic}) how InversionSort can be extended to the multichromatic-cost setting too. This result opens up the search for other, more complex variants that also admit polylog competitive ratios.

\noindent\textbf{Future Directions:} We believe that the $O(\log^{3} N)$ competitive ratio of InversionSort can likely be reduced to an optimal $O(\log N)$. While InversionSort is randomized, it would also be very interesting to construct a deterministic algorithm for bichromatic sorting with similar guarantees. As we mention, bichromatic sorting generalizes the nuts-and-bolts problem, for which considerable effort was required \cite{alon1994matching,komlos1998matching} to match the simple randomized $O(N \log N)$ run time by a deterministic algorithm. We believe that due to the absence of matches, there are interesting obstacles in this direction.

\noindent\textbf{Roadmap:} In Section 2 we describe InversionSort. Section 3 sketches the proof of the guarantee presented in Theorem~\ref{thm:bichromaticinversionsort}. Finally, Section 4 gives the detailed analysis. Due to space constraints some proofs are moved to the appendix.
\section{InversionSort: Definition} \label{sec:bichromatic}

We first start by describing the setup of InversionSort. 
The design principle behind InversionSort is to focus on identifying the monochromatic stretches of the Hamiltonian called \emph{stripes}.
Intuitively, InversionSort operates like a variant of Quicksort, where the control flow is BFS-like instead of the usual DFS-like recursive control flow.
InversionSort focuses on using bichromatic comparisons as much as possible.
A generic state of InversionSort will be defined using a \emph{backbone}, which is a sequence of compared elements of alternating colors, called \emph{representatives} or \emph{pivots}. 
Each representative will be assigned a \emph{bucket}, which is a set of elements of the same color, that lie between the two neighboring representatives of the other color on the backbone.
This backbone is refined if new inversions are found, giving the algorithm its name.
This leads to a ternary \emph{refinement tree}, where every node stands for a subproblem that was split in three by finding a new inversion.
In this first (and only interesting) phase of the algorithm, monochromatic comparisons are only used to increase the probability of finding an inversion or to verify that no further inversions exist.
Here, we use a classical technique of balancing costs that does for each red-red comparison $\alpha$ bichromatic comparisons, and for each blue-blue comparison $\beta$ bichromatic comparisons

We first develop some notation to better explain InversionSort and explain how the algorithm works. 
We summarize this discussion in pseudocode as Algorithm~\ref{alg:InversionSort}.
For simplicity of exposition, we add an artificial smallest red element and an artificial largest blue element to the input.
Doing comparisons with these artificial elements does not incur any cost.

InversionSort makes progress from one state to the next by performing three steps: a) finding an \emph{inversion} (which we define soon)
between neighboring representatives on the backbone, b) inserting this inversion on the backbone, and c) pivoting with these two elements, thereby refining the buckets. 

\begin{definition}\label{def:backbone}
[Backbone, Representatives, and Buckets]
The backbone consists of a totally ordered, alternating list of \emph{representatives}
$(u_0,u_1,u_2,\ldots,u_{2k},u_{2k+1})=(r_0,b_1,r_2,\ldots,r_{2k},b_{2k+1})$,
where $r_{2i} \in R$ and $b_{2j+1} \in B$ 
with $r_{i}<b_{i+1}$ and $b_i <r_{i+1}$.
Here, $r_0$ is an artificial red element that is smaller than all elements, and the last element $b_{2k+1}$ is an artificial blue element that is larger than all elements.
The representatives define the \emph{buckets}
$(X_0,X_1,X_2,\ldots,X_{2k},X_{2k+1})=(R_0,B_1,R_2,\ldots,R_{2k},B_{2k+1})$ by 
$R_i = \{ x\in R \mid b_{i-1} < x < b_{i+1} \} \setminus \{r_{i}\}$
and 
$B_i = \{ x\in B \mid r_{i-1} < x < r_{i+1} \} \setminus \{b_{i}\}$.
Here, as a convention, the representative is not included in the bucket. Again, 
$R_0 = \{ x\in R \mid x < b_{1} \}$
and
$B_{2k+1} = \{ x\in B \mid r_{2k} < x \}$
are special cases.
\end{definition}

\begin{definition}\label{def:active}
    [active subproblems and buckets]
    As long InversionSort did not create a certificate that there are no further inversions between $x_i$ and $x_{i+1}$, the \emph{subproblem} defined by the buckets $X_i$ and $X_{i+1}$ is called active, and so are $X_i$ and $X_{i+1}$.
\end{definition}

From now on, we will use $u_i$ and $X_i$ when we don't care about the color of the $i$th representative and bucket, otherwise we will switch to $r_{i}/b_{i}$ and $R_{i}/B_{i}$. We observe that the buckets of two non-adjacent representatives on the backbone are ordered in the same way as the representatives:
\begin{lemma}\label{lem:backboneTrans}
    Let $X_i$ and $X_j$ be buckets (of arbitrary color) and assume  $i+1\le j-1$.
    Then for all $x\in X_i$ and $y\in X_j$ we have $x<y$.
\end{lemma}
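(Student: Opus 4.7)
The plan is to read off the bounds directly from Definition~\ref{def:backbone} and compose them using the total order on the backbone. No induction, no algorithmic invariant beyond what is already baked into the definition, is needed.

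First I would record the observation that the backbone itself is totally ordered: $u_0<u_1<\cdots<u_{2k+1}$. This is immediate from the definition, which states $r_i<b_{i+1}$ and $b_i<r_{i+1}$ for every applicable index, and the representatives are stored as an alternating, totally ordered list.

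Next I would extract one-sided bounds from the bucket definitions. For any $x\in X_i$, the definition gives $x<u_{i+1}$: if $i\ge 1$ this is the ``$<b_{i+1}$'' or ``$<r_{i+1}$'' clause (depending on the color of $u_i$); for the special case $i=0$, the definition $R_0=\{x\in R\mid x<b_1\}$ again yields $x<u_1=u_{i+1}$. Symmetrically, for any $y\in X_j$ one has $y>u_{j-1}$, using the special case $B_{2k+1}=\{x\in B\mid r_{2k}<x\}$ when $j=2k+1$.

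Finally I would chain these inequalities. The hypothesis $i+1\le j-1$ ensures that both $u_{i+1}$ and $u_{j-1}$ exist as representatives on the backbone, and, by the total order from step one, $u_{i+1}\le u_{j-1}$. Therefore
\[
x \;<\; u_{i+1} \;\le\; u_{j-1} \;<\; y,
\]
which is the claimed inequality $x<y$.

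Honestly, there is no real obstacle here; the only thing to be careful about is the boundary indices $i=0$ and $j=2k+1$, where the bucket definition is given by a different clause but still supplies the one-sided bound needed for the chain. Once those edge cases are checked, the lemma follows in one line from the transitivity of $<$ on the backbone.
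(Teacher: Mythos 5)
Your proof is correct and follows essentially the same route as the paper's: extract $x<u_{i+1}$ and $u_{j-1}<y$ from the bucket definitions, note $u_{i+1}\le u_{j-1}$ from the hypothesis $i+1\le j-1$ and the total order on the backbone, and chain the inequalities. Your extra attention to the boundary clauses $i=0$ and $j=2k+1$ is a harmless elaboration of what the paper leaves implicit.
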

\begin{proof}
    By definition, $x<u_{i+1}$, and $u_{j-1}<y$.
    If $i+1 < j-1$, then $u_{i+1}< u_{j-1}$ is implied by transitivity on the backbone, otherwise $u_{i+1}$ is the same element as $u_{j-1}$.
    In either case, we get that $x < u_{i+1} \le u_{j-1} < y$. 
\end{proof}

\subsection{Inversions and how to find them}\label{sec:inversionSearch}

We now define an \emph{inversion}, which gives the algorithm its name. 
Consider adjacent representatives $u_i$ and $u_{i+1}$, their corresponding adjacent buckets~$X_i$ and $X_{i+1}$, and a bichromatic pair $(x,y)$ of elements $x\in X_i$ and $y\in X_{i+1}$.
Unlike in Lemma~\ref{lem:backboneTrans}, $x$ and $y$ are not ordered by transitivity of the backbone.
Because $x$ and $y$ are of different color, they can be compared at cost 1, which is cheaper than any monochromatic comparison.
If $y<x$, we call the pair an \emph{inversion}. This is because  it allows us to extend the backbone: we get
$u_i<y<x<u_{i+1}$, which is a chain of actual comparisons between elements of alternating color. %

The  simplest way to find an inversion is to uniformly at random, from all pairs in $X_i$ and $X_{i+1}$, pick $x$ and $y$.
If the fraction of inversions is $p$, then the probability of finding an inversion is~$p$ and the expected number of trials to find one is~$1/p$.

If $\alpha$ and~$\beta$ are really large, it is efficient to use only bichromatic comparisons to find an inversion.
Otherwise, some monochromatic comparisons can help.
Using a well known balancing technique, we proceed in rounds and do one (or constantly many) bichromatic comparison per round, and every~$\alpha$ rounds  one red-red comparison, every~$\beta$ rounds one blue-blue.
Then, in an amortized sense, the monochromatic comparisons contribute cost one per round.
These monochromatic comparisons are used to maintain the max / min of a sample of elements in the bucket as follows.
The representative of the bucket is always considered sampled, and hence it is initially the max and min of the sample.
More precisely, we maintain the sample $S_i\subset X_i$ with $u_i\in S_i$.
When sampling uniformly $x\in X_i \setminus S_i$, we perform one or two monochromatic comparisons to check if it is the new max or min of~$S_i$.
If it is, say $x=\max S_i$, we (bichromatically) test $y=\min S_{i+1} < x$.
If this test is successful, we found an inversion $u_i<y<x<u_{i+1}$.
To get the randomness of this inversion into our framework, we use only one of its endpoints, say~$y$ to split~$X_i$, by comparing all elements of~$X_i$ with~$y$ leading to $X_i'<y<X_i'' (<u_{i+1})$.
Because $x\in X_i''$ and $u_i\in X_i'$, both are nonempty.
Now we choose uniformly at random~$x'\in X_i''$ and use $y,x'$ as the inversion.

In every round of comparisons, we also chose a random $x \in X_i$ and compare $(u_{i-1} <) x < \max S_{i-1}$ and $\min S_{i+1} < x (< u_{i+1})$.
If one of them is true, we found a sufficiently random inversion.

If there is no further inversion to be found, this random sampling must fail.
For two buckets with $A$~red and $B$~blue elements, w.l.o.g. red smaller than blue, there are four potential proofs that no further inversion exists:
\begin{enumerate}
    \item all $AB$ red-blue comparisons at total cost~$AB$.
    \item $A-1$ red-red comparisons to find the maximal red, and $B$ red-blue comparisons with this maximum, total cost $\alpha(A-1) + B$.
    \item $B-1$ blue-blue comparisons to find the minimal blue, and $A$ red-blue comparisons with this minimum, total cost $\beta(B-1) + A$.
    \item $A-1$ red-red comparisons to find the maximal red, $B-1$ blue-blue comparisons to find the minimal blue, and 1 red-blue comparison between the maximum and the minimum, total cost~$\alpha(A-1) + \beta(B-1) + 1$. (This case is never the cheapest if $\alpha,\beta>1$)
\end{enumerate}
Observe that the cost of these proofs only depends on the sizes of the stripes/buckets.
Hence, InversionSort computes these values and chooses the cheapest.
If the accumulated cost of randomized inversion finding exceeds the cost of this cheapest proof, it (deterministically) performs all comparisons required by the proof.
If this indeed shows that there are no further inversions, this subproblem is finished, i.e. no longer active.
Else, this attempt to establish a proof must find one or more inversions.
Similar to the case of finding an inversion $\min S_{i+1} < \max S_{i}$, We choose an inversion that is living up to our randomness requirements (as spelled out in Lemma~\ref{lem:uniform}).

Observe that the Hamiltonian is a proof of type~4 and that hence the chosen proof is at most as expensive as that part of the Hamiltonian.
This fits to the purpose of the proof to only identify the stripes, not (yet) to sort the instance.

\subsection{Description of InversionSort}
\begin{algorithm}
    \caption{Algorithm InversionSort}
    \label{alg:InversionSort}
    \begin{algorithmic}
        \REQUIRE elements $R$ red, $B$ blue
        \STATE create trivial backbone $\mathcal{B}$ from $R$ and $B$, see Definition~\ref{def:backbone}
        \STATE $r \leftarrow 0$
        \WHILE{there is an active subproblem in $\mathcal{B}$}
        \STATE $r \leftarrow r+1$
        \IF{$r \mod \alpha == 0$ }
            \FOR{each active red bucket (see Definition~\ref{def:active})}
                \STATE Sample one more red element, update max/min using red-red comparisons
            \ENDFOR
        \ENDIF
        \IF{$r \mod \beta == 0$ }
            \FOR{each active blue bucket}
                \STATE Sample one more red element, update max/min using blue-blue comparisons
            \ENDFOR
        \ENDIF
        \FOR{each active (see Definition~\ref{def:active}) bucket $s$}
            \STATE Sample one element~$x_s$
        \ENDFOR
        \FOR{each active subproblem between buckets $s$ (left), and $q$ (right)}
            \STATE Test for inversion between, use the first one of
            \STATE ~
            \begin{minipage}{9cm}
                \begin{enumerate}[noitemsep]
                    \item $x_s$ and $x_q$
                    \item $x_s$ and $\min\{ \hbox{sampled in~}q \}$
                    \item $\max\{ \hbox{sampled in~}q \}$ and $x_q$
                    \item $\max\{ \hbox{sampled in~}q \}$ and $\min\{ \hbox{sampled in~}q \}$
                \end{enumerate}                    
            \end{minipage}
        \ENDFOR
        \FOR{each active subproblem where $r$ - mark (age) > cheapest certificate based on sizes (Section~\ref{sec:inversionSearch})}
        \STATE do the comparisons of the cheapest certificate
        \IF{this leads to a certicate}
        \STATE the subproblem is finished, i.e. no longer active
        \ELSE
        \STATE at least one inversion is found
        \ENDIF
        \ENDFOR 
        \FOR{each found inversion}
        \STATE update the backbone, including splitting buckets and resampling pivots (Section~\ref{sec:inversionSearch})
        \STATE mark new subproblems with round $r$, 
        \ENDFOR
        \ENDWHILE
        \STATE sort the monochromatic stripes individually
    \end{algorithmic}
\end{algorithm}

InversionSort starts by (trivially) having the backbone consist only of the artificial smallest red element~$r_0$ and largest blue element~$b_1$, and $R_0=R$ and $B_1=B$.

For a given backbone $(u_0,u_1,u_2,\ldots,u_{2k},u_{2k+1})=(r_0,b_1,r_2,\ldots,r_{2k},b_{2k+1})$, InversionSort first, for each pair $X_{i},X_{i+1}$ of adjacent buckets that have no
proof that there is no further inversion, in a round-robin manner, does one round of comparison with amortized constant cost, as spelled out in Section~\ref{sec:inversionSearch}.
If this leads to an inversion, the inverted pair is saved and the algorithm moves to the next pair of adjacent buckets.

At the end of the round, all identified inversions are considered and used to extend the backbone.
Then InversionSort splits existing buckets by pivoting with new elements on the backbone. 
Because there is at most one pair of inversions between each two neighboring representatives on the backbone, each element is compared to at most two new representatives in each round.

This reestablishes the backbone and creates some new pairs of neighboring buckets, for which we initialize the inversion finding procedures.
Existing buckets (identified by the representative) might get smaller. %
Here, a new inversion finding might reuse some of the already established samples and their structure. 
In the analysis we present here, this is not used, so for the sake of simplicity, we assume the algorithm starts from scratch.

The algorithm stops once all neighboring pairs of buckets are no longer active, i.e., shown to not have an inversion.
In this case, the current buckets form the stripes of the instance. 
The algorithm finishes by sorting these buckets (including the representatives) using an optimal number of additional monochromatic comparisons.

\section{InversionSort Analysis: Proof Structure}

 The analysis of InversionSort constitutes the technically challenging part of the paper. 
 Let us visualize a run of InversionSort as a ternary (refinement) tree, where nodes correspond to subproblems. 
 For an internal node~$v$, there is a corresponding subinterval on the backbone defined by two consecutive pivots, say a blue pivot followed by a red pivot, $b_{v} < r_{v}$.  If InversionSort finds an inversion~$y<x$ ($x$ is blue and $y$ is red) between~$b_v$ and $r_v$, then~$v$ has three children with the respective pivots $(b_v,y)$, $(y,x)$, $(x,r_v)$. Note that at a snapshot somewhere during a run of InversionSort, the tree explored so far may be far from being a complete tree - InversionSort could be stuck on a large-sized problem in one region, while refining and working way down a descendant of an interval in another region. Hence InversionSort does not proceed layer-by-layer on this tree.

 The random nature of the inversion searching of InversionSort, as made precise in Lemma~\ref{lem:uniform} leads to the following insight:

\begin{restatable}[Height of the refinement tree]{theorem}{thmheight}
    \label{thm:height}
    Let $\T$ be the refinement tree of running InversionSort on an instance~$\I$ with $N=n+m$ elements.
    With high probability in~$N$, the height of~$\T$ is $O(\log N )$.
\end{restatable}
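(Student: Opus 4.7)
The plan is to mimic the standard analysis of the depth of a random binary search tree, adapted to the three-way split induced by an inversion. Associate to every node $v\in\T$ a \emph{size} $s_v$, namely the number of elements in the active region between the two pivots that define the subproblem at $v$, so that $s_{v_0}=N$ at the root and $s_{v_h}=O(1)$ at every leaf $v_h$. The three children of an internal node $v$ correspond to the left, middle, and right subintervals created by inserting the new pivots $y$ and $x'$ into the backbone, and the analysis reduces to proving that along any fixed root-to-leaf path, $s_v$ shrinks by a constant factor at a constant fraction of the levels.

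The key step is to extract from Lemma~\ref{lem:uniform} the following claim: at any active node $v$, conditioned on the history up to the moment the splitting inversion $(y,x')$ is produced, the endpoint $x'$ is uniformly distributed in a prefix or suffix of the red bucket $X_i''$ of size $\Theta(s_v)$, and the endpoint $y$ is drawn so that its rank in $X_{i+1}$ is sufficiently spread out. From this I would deduce, for a fixed target child (left, middle, or right), that its subproblem has size at most $(1-\gamma)s_v$ with probability at least $p$, for absolute constants $\gamma,p>0$. The argument is standard for the two outer children, since it amounts to saying that a uniform rank falls outside the top or bottom $\gamma$-fraction of $s_v$ positions with constant probability; the middle child requires a separate two-point argument, using that the gap between the ranks of $y$ and $x'$ is at most $(1-\gamma)s_v$ with constant probability because neither endpoint concentrates at the extremes of its bucket.

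Now fix any root-to-leaf path $v_0,v_1,\dots,v_h$, and let $Y_i$ be the indicator of $s_{v_i}\le (1-\gamma)s_{v_{i-1}}$. By the preceding step $\EXP[Y_i \mid \mathcal{F}_{i-1}] \ge p$ for the natural filtration $\mathcal{F}_{i-1}$ of past comparisons, so a standard martingale Chernoff bound gives $\Pr\bigl[\sum_{i=1}^h Y_i < ph/2\bigr] \le e^{-\Omega(h)}$. Choose $h = C\log N$ with $C$ large enough that $(p/2)C\log N > \log_{1/(1-\gamma)} N$; then having at least $ph/2$ good steps would force $s_{v_h}<1$, which is impossible, and hence the path has length at most $h$ except with probability $N^{-\Omega(C)}$. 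A union bound over the at most $O(N)$ leaves of $\T$ yields $\Pr[\mathrm{height}(\T) > C\log N] \le N^{-\Omega(C)+1}$, which is the desired statement for $C$ sufficiently large.

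The main obstacle I anticipate is justifying the constant-probability shrinkage for \emph{every} one of the three children uniformly, because the inversion $(y,x')$ is sampled conditional on actually being an inversion rather than from a fresh uniform distribution over $X_i\times X_{i+1}$; this is precisely what Lemma~\ref{lem:uniform} is designed to provide, and the middle-child case will likely require the most care, since its size is controlled by the gap between two correlated samples rather than by a single random rank. A minor technical point is to set up the filtration so that the conditional expectation bound survives through the round-robin structure of Algorithm~\ref{alg:InversionSort}, where many subproblems progress in parallel; this is handled by conditioning on the complete algorithmic history up to the inversion that creates $v_i$.
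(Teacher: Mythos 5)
The overall scaffold of your argument---constant-probability, constant-factor progress at each node, a Chernoff bound along every root-to-leaf path, and a union bound over the $O(N)$ paths---is exactly the paper's. But your key step is wrong as stated, and the choice of potential function is where the two arguments part ways. You claim that each of the three children has \emph{size} (number of elements) at most $(1-\gamma)s_v$ with constant probability. This is false: the randomness of the inversion $(y,x)$ only controls the rank of $y$ among the \emph{blue} elements of its bucket and the rank of $x$ among the \emph{red} elements, so after a split the left child can inherit essentially all of $R_v$ and the right child essentially all of $B_v$. If the subinstance at $v$ is heavily unbalanced (say $|R_v| = 0.99\,s_v$, with the reds concentrated below the unique inversion), the left child has size $\approx s_v$ with probability $1$, so $\EXP[Y_i\mid\mathcal F_{i-1}]\ge p$ fails and your counting argument ($ph/2$ good steps forcing $s_{v_h}<1$) collapses. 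No amount of care with Lemma~\ref{lem:uniform} or with the middle child rescues the sum $|R_v|+|B_v|$ as a potential.

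The paper's fix (Lemma~\ref{lem:productProgress}) is to track the \emph{product} $P_v=|R_v|\cdot|B_v|$ instead. For each child at least one of the two color-counts drops by a constant factor with constant probability---the blue count for the two children bounded by $y$, the red count for the outer child bounded by $x$---and since the other factor can only stay the same or shrink, the product satisfies $\max_w P_w\le\frac78 P_v$ with probability at least $1/2$. Because $P_{\mathrm{root}}\le N^2$ and $P_v\ge 1$ at any node that can still be split, a path can contain only $O(\log N)$ product-reducing nodes, and your Chernoff-plus-union-bound machinery then applies verbatim (the paper phrases the union bound over elements $x$ rather than over leaves, which is equivalent). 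So: keep your probabilistic scaffolding, but replace $s_v$ by $P_v$ and prove the per-child shrinkage separately for the two factors rather than for their sum.
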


As a second challenge in the analysis, because of the overlapping nature of the problem, InversionSort cannot easily focus on elements between neighboring representatives.
For example, for the child indicated by pivots $(y,x)$, instead of only getting the reds and blues that actually lie in this range as input, InversionSort instead has to also work with the red elements contained in $(b_v,y)$ and the blue elements inside $(x,r_v)$. This ``spill-in'' from the neighboring subintervals on the backbone needs to be analyzed. See Figure~\ref{fig:stripes1} for an example.
 Thus we distinguish between \emph{subinstances} and \emph{subproblems}. 

\begin{figure}[h]
    \begin{center}
    \includegraphics[width=14cm]{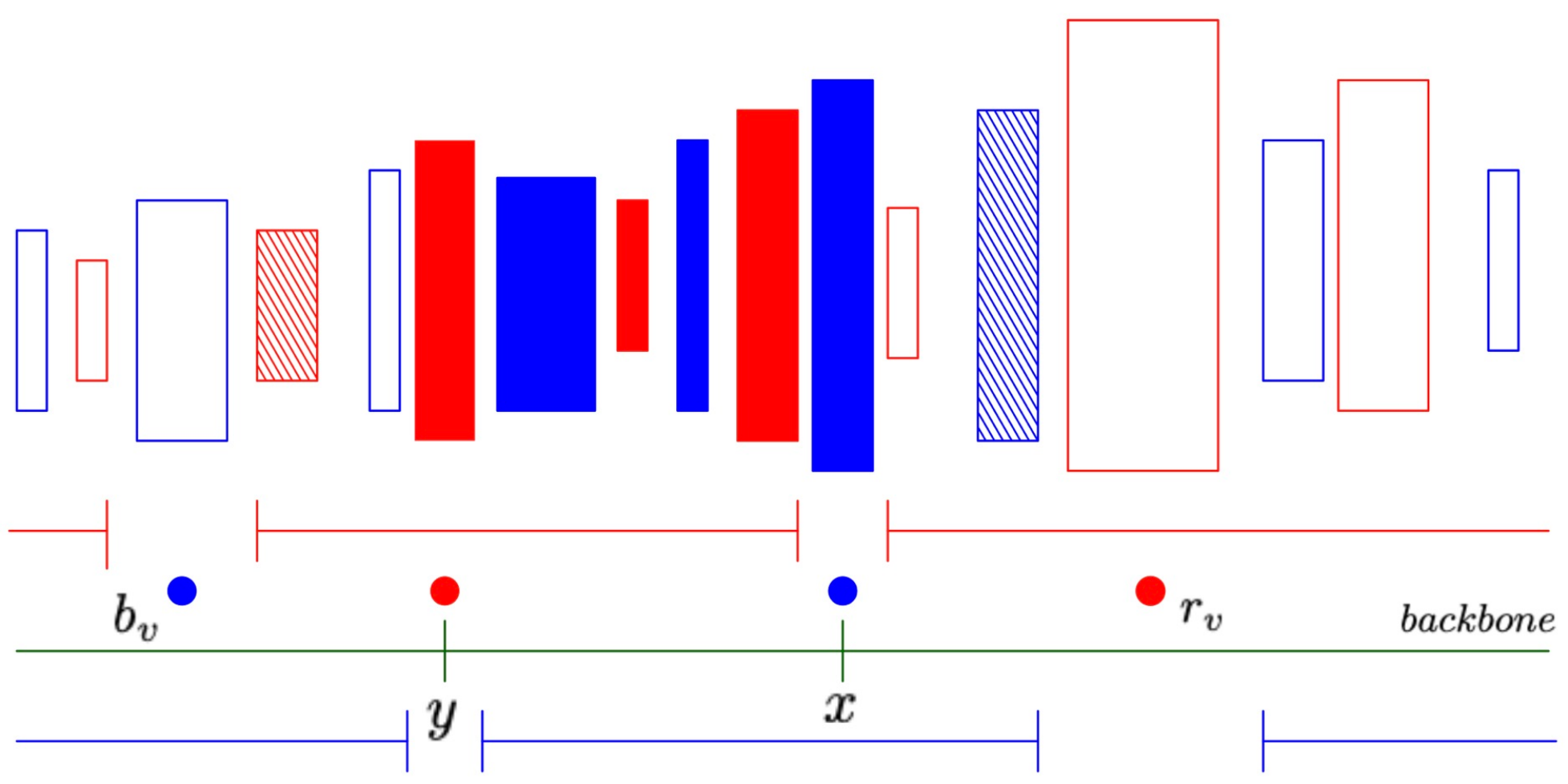}
    \end{center}
    \caption{The difference between a subproblem and a subinstance. $b_v$ and $r_{v}$ are on the backbone, and $y<x$ is the inversion found between them. The stripes in the subinstance $(y,x)$ are solid, whereas the stripes in the subproblem $(y,x)$ are the union of the solid and shaded stripes.}
    \label{fig:stripes1}
\end{figure}

As we will argue, the cost of inversion search procedure of InversionSort is justified by the subinstance between the neighboring elements on the backbone. 
However, if the spill-in for this subproblem is too large, inversion search is too costly.
Hence, we will identify subproblems that do not have too much spill-in from their neighbors - we call these subproblems \emph{unaffected}. 
Inversion search in unaffected subproblems can be charged to their subinstance. What remains is to then show that many subproblems are actually unaffected. In Lemma~\ref{lem:noLongWO} we show that at any time, with high probability, at least roughly a $1/(\log N)^2$ fraction of all current problems are unaffected. This requires a careful accounting of how the unaffected nodes are distributed in the tree. 

\paragraph{Putting everything together} As mentioned above, a $(\log N)^2$ factor appears while accounting (with high probability) for the affected nodes on one snapshot of the backbone. 
Accounting over the whole tree, including the pivoting, introduces another $\log N$ factor corresponding to the depth of the tree. 
\section{InversionSort: Detailed Analysis}
\label{sec:inv_opt_bichromatic}

\begin{definition}[Stripe]\label{def:stripe}
    Let $x$ be an element of an instance $\I$ of bichromatic sorting.
    The the \emph{stripe} of $x$ consists of all elements~$y$ of~$\I$ where no element of the other color is between $x$ and~$y$.
\end{definition}

\noindent\textbf{Refinement tree as a trace and mean to analyze:}
Looking only at the backbone, the algorithm does an interval refinement. 
If we look back in time, the final backbone is given by the instance, only the choice of the representative of a stripe is arbitrary (everything else is determined by the instance).
In contrast, the evolution of the backbone is very much driven by the random choices of the algorithm and leads to a hierarchy of intervals on the backbone. 
A new inversion and the pivoting steps split one interval of the backbone (not to be confused with a bucket that somehow spans two intervals of the backbone) into three, as detailed in the following definition.

\begin{definition}
    [Refinement tree]
    The root node of the tree has as bounding pivots the artificial smallest red and largest blue element.
    Every node~$v$ of the tree has two pivots~$b_v$ and $r_v$ of the (final) backbone associated to it with the guarantee that $b_v$ and $r_v$ were neighbors on the backbone at some stage of the algorithm.
    If $r_v<b_v$ the node is said to have \emph{polarity} ``red smaller than blue'', otherwise ``blue smaller than red''.
    If the algorithm finds an inversion~$x,y$ between~$b_v$ and $r_v$, then~$v$ has three children with the respective pivots $(b_v,y)$, $(y,x)$, $(x,r_v)$ or $(r_v,y)$, $(y,x)$, $(x,b_v)$.
    The polarity of the middle child is the opposite of the polarity of~$v$, whereas the two outer children have the same polarity as~$v$. See Figure~\ref{fig:stripes2}.
    If the algorithm finishes the subproblem by completing verification and concludes that $b_v$ and $r_v$ are representing neighboring stripes of the output, then $v$ is a leaf of the tree.
\end{definition}

\begin{figure}[ht]
    \centering
    \includegraphics[scale=0.4]{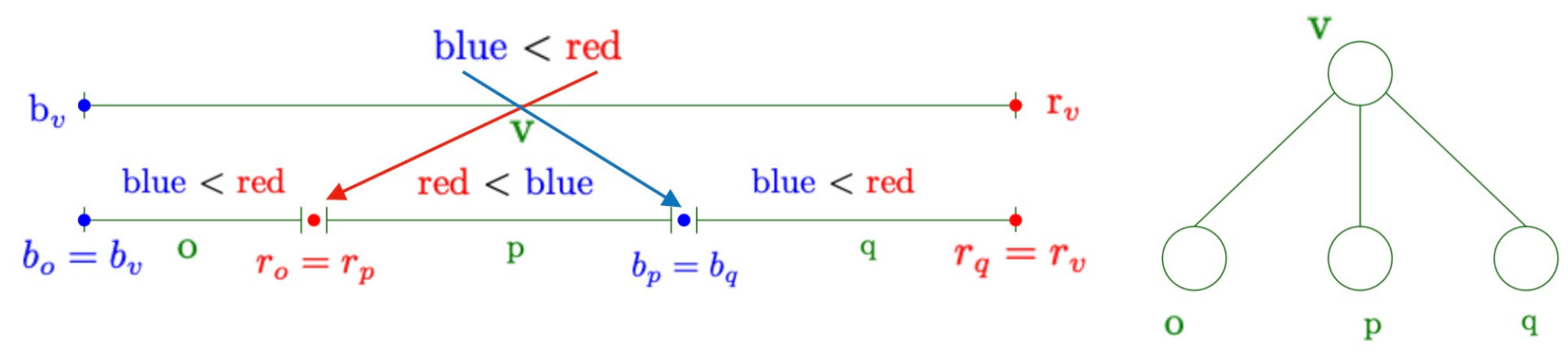}
    \caption{A parent node $v$ and its three children $o,p,q$ in the refinement tree. 
    The pivots for the parent are $(b_v,r_v)$, and since the parent is of type \(\hbox{blue}<\hbox{red}\), the inversion found is of type \(\hbox{red}<\hbox{blue}\). 
    The left and right children have the same polarity as the parent, whereas the middle child gets opposite polarity.}
    \label{fig:stripes2}
\end{figure}
To analyse the refinement tree, it is convenient to work with a totally ordered list~$L_I = (x_0,x_1,\ldots x_{n+m-1})$ of all red and blue elements. 
In the case of bichromatic sorting, this total order is given by the sorted ordering. 

\begin{definition}\label{def:listsubinstance}
    [List subinstance, Stripe subinstance and subproblem]
    Let $r\in R$ and $b\in B$ be two elements and let $i,j$  be their indices in $L_I$, 
    i.e., $x_i=r$ and $x_j=b$.
    Then, the instance $I_{rb}$ given by the list $(x_i,\ldots,x_j)$ or $(x_j,\ldots,x_i)$ is called the \emph{list subinstance} of $L_\I$ defined by $r$ and $b$. 
    Let~$\I$ be an instance of bichromatic sorting and let $x_o,x_p,x_q,x_r$ be four consecutive pivots on the backbone of InversionSort at some point in time.
    Then the \emph{stripe subinstance} of $x_p$ and $x_q$ consists of the stripes of $x_p$ and~$x_q$ and all elements between $x_p$ and $x_q$.
    The \emph{subproblem} consists of the two buckets of $x_p$ and $x_q$, i.e., the elements of the color of $x_p$ between $x_o$ and~$x_q$, and the elements of the other color (that of $x_q$) between $x_p$ and $x_r$.
\end{definition}

Observe that at any snapshot of the algorithm, any existing backbone corresponds to the in-order leaf traversal of some subtree  of the refinement tree that includes the root.\footnote{This subtree is obtained by removing from the full refinement tree subtrees of those internal nodes that have not yet found inversions.} 
This traversal consists of internal nodes and leafs of the complete refinement tree.
See Figure~\ref{fig:tree}.

\begin{figure}[ht]
    \centering
    \includegraphics[width=14cm]{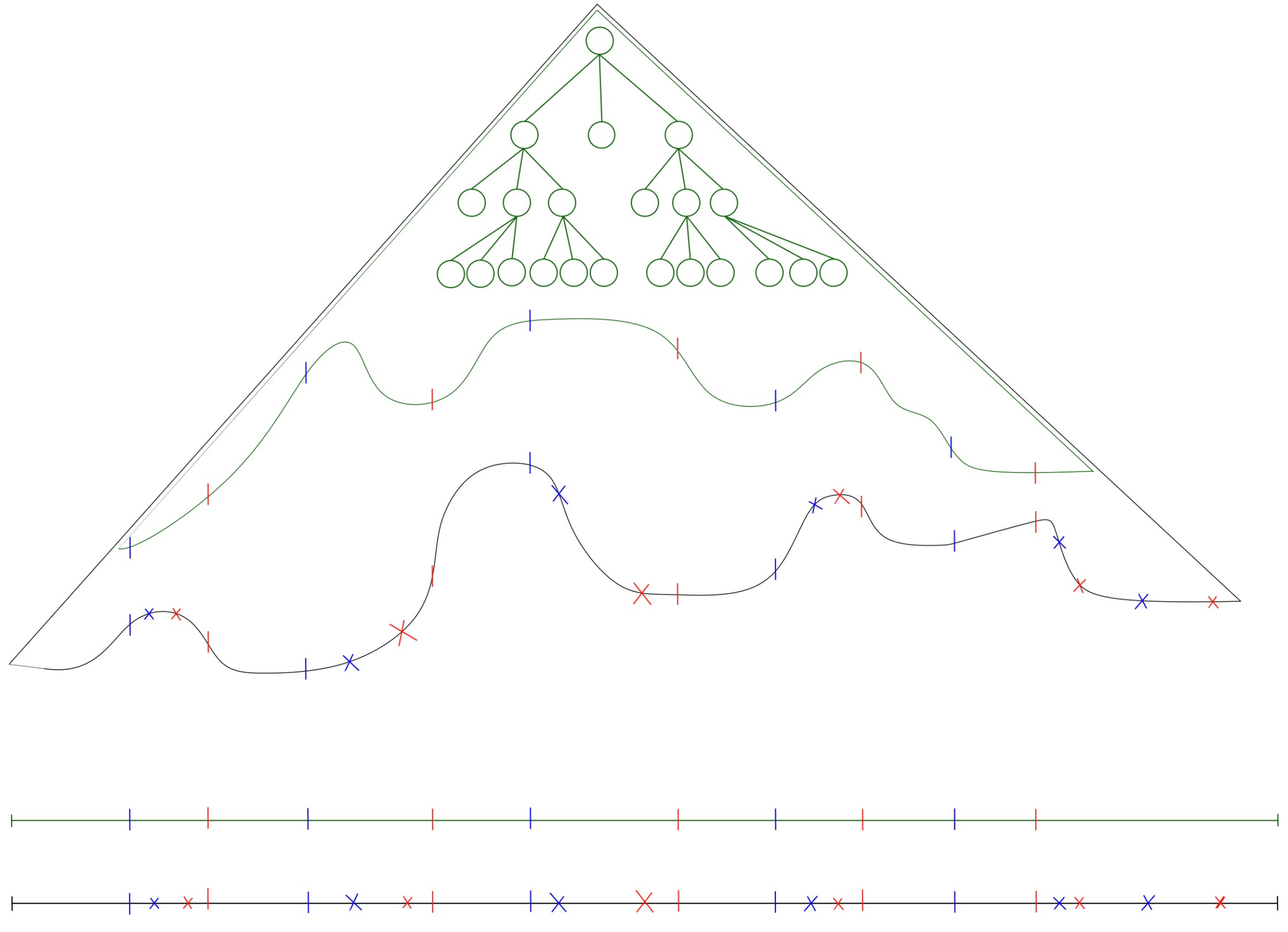}
    \caption{A sketch of the complete ternary refinement tree (in black) and the partial refinement tree corresponding to a snapshot of the backbone during a run of InversionSort.}
    \label{fig:tree}
\end{figure}

\subsection{Inversion Finding}
We start with a central insight about the randomness of inversion finding that has important consequences for the depth of the refinement tree of InversionSort.

\begin{restatable}[Randomness in Inversion Finding]{lemma}{lemInvSortUniform}\label{lem:uniform}
At any stage of the InversionSort, consider a successful inversion finding procedure, which finds an inversion $y<x$ between representatives $u_i<y<x<u_{i+1}$. 
Say, w.l.o.g., that $u_i$ is red and $u_{i+1}$ is blue, and hence $x$ is red and $y$ is blue.
\begin{enumerate}
    \item for any~$y\in X_{i+1}$, 
    conditioned on $y$ being in the inversion, we have $P(x=a) \le P(x=b)$ for $a<b$,
    i.e., $x$ is uniformly distributed among all the red elements in $R_y = \{x \in X_i \mid y<x<u_{i+1}\}$ or  biased towards the outside of $R_{y}$.
    \item for any~$x\in X_{i}$, conditioned on $x$ being in the inversion, $y$ is either uniformly distributed in $B_x=\{y \in X_{i+1} \mid u_i<y<x\}$, or biased towards the outside, i.e., $P(y=a) \ge P(y=b)$ for $a<b$. 
    \item Either statement (1) holds with the uniform distribution, or statement (2) holds with the uniform distribution.
\end{enumerate}
\end{restatable}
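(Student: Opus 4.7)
My plan is a case analysis on which of the subroutines in Section~\ref{sec:inversionSearch} produced the reported inversion $(x,y)$, supported by two elementary distributional facts. First, every \emph{fresh} sample $x_s$ drawn in the current round is uniform in its bucket and independent of the sample history of every other bucket. Second, for a uniform random subset $S_q \subseteq X_q$, the extreme $y = \min S_q$ has $P(y=a)$ weakly decreasing in $a$, i.e., biased toward the outside of~$X_q$, and symmetrically $\max S_s$ is biased toward the larger end; both follow from elementary counting of subsets that contain $a$ but nothing below it.

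In Case~1 both endpoints are fresh samples $x_s,x_q$: conditioning on $x_q=y$ and on the inversion event $x_s>y$ leaves $x_s$ uniform on $R_y$, and the symmetric statement gives $x_q$ uniform on $B_x$, so (1), (2), and (3) all hold with uniform distributions. In Cases~2 and~3 one endpoint is fresh and the other is an extreme; the fresh endpoint delivers uniformity in one direction (say statement (1) in Case~2), while the extreme's monotone marginal, restricted to the conditioning set $\{y<x\}$, remains monotone and delivers outside-bias in the other direction. The essential case is~4 (and the deterministic verification case), where both endpoints are extremes. Here I invoke the re-randomization of Section~\ref{sec:inversionSearch}: pick one endpoint, say~$y$, use it to split $X_s$ into $X_s' < y < X_s''$, draw a \emph{new} uniform $x' \in X_s'' = R_y$, and report $(x',y)$. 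By construction $x'$ given~$y$ is uniform on $R_y$, giving statement (1) with uniform distribution and hence (3); if one needs (2) to be uniform instead, the symmetric choice---splitting $X_q$ with $x$---is available.

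The main obstacle is justifying the residual outside-bias for the non-uniform direction in Case~4 after re-randomization. One finds $P(y=b\mid x'=a)\propto P(\min S_q=b)/|R_b|$ for $b<a$, and both numerator and denominator are decreasing in $b$, so monotonicity of the conditional density does not fall out automatically. The plan is to open the closed-form $P(\min S_q=b)=\binom{|X_q|-\mathrm{rank}_q(b)}{k-1}/\binom{|X_q|}{k}$ and compare its multiplicative step between consecutive positions of $b$ in $X_q$ to the step in $|R_b|$, which loses only the reds sandwiched between two consecutive blues. The geometric decay of the extreme's density dominates, giving the required bias. A secondary subtlety is that conditioning on earlier rounds failing to find an inversion could \emph{a priori} skew the current round's distributions, but this collapses to a one-line marginal once one observes that each round's fresh draws are independent of all prior randomness.
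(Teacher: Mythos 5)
Your overall route is the same as the paper's: a case analysis on which test produced the inversion, with fresh uniform draws giving exact uniformity, extremes of uniform samples giving outside-bias, and the re-randomization of Section~\ref{sec:inversionSearch} rescuing the both-extremes case by making one endpoint uniform conditioned on the other (which is precisely how statement~(3) is obtained). Your Cases 1--3 are sound: there the ``fresh'' endpoint is independent of the sample sets, so conditioning on its value merely restricts the monotone marginal of $\min S_q$ (resp.\ $\max S_s$) to the inversion event, and monotonicity survives restriction.

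The gap is in your resolution of the obstacle you yourself flag in Case~4. The claim that the geometric decay of $P(\min S_q=b)$ dominates the growth of $1/|R_b|$ is false in general. Writing $n=|X_q|$, $k=|S_q|$ and $r$ for the rank of $b$ from below, the multiplicative step of the numerator is $\binom{n-r-1}{k-1}/\binom{n-r}{k-1}=1-\tfrac{k-1}{n-r}$, which is close to $1$ when the sample is small, whereas $|R_b|$ drops by the number $m$ of reds sandwiched between $b$ and its blue successor, which can be a constant fraction of $|R_b|$. The monotonicity you need is $\tfrac{k-1}{n-r}\ge \tfrac{m}{|R_b|}$; with $k=2$, fifty blues above $b$, and fifty of the sixty reds of $R_b$ packed between $b$ and the next blue, this reads $0.02\ge 0.83$. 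So $P(y=b\mid x'=a)\propto P(\min S_q=b)/|R_b|$ need not be decreasing in $b$, and statement~(2) does not follow by this route for the re-randomized case. For what it is worth, the paper's own proof does not confront this point either: it asserts the outside-bias of $y=\min(S_B\cap B_x)$ as if $S_B$ were unconditioned, ignoring exactly the Bayes tilt you identified. The downstream use in Lemma~\ref{lem:productProgress} really relies on statement~(3) plus a \emph{marginal} bias of the extreme endpoint, both of which your argument does deliver; to repair your write-up you should either prove only that weaker (marginal or stochastic-dominance) form of~(2), or find a genuinely different argument for the pointwise conditional monotonicity, since the step as proposed fails.
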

\begin{proof}
    Conditioned on one endpoint of the inversion being fixed, say the red $x$.
    Then there is a set~$B_x$ of blue elements that can form an inversion with~$y$ (if not, $x$ cannot be part of an inversion). 
    If the inversion was found by a uniformly sampled pair, each~$y\in B_x$ is chosen with the same probability, showing the statement of the lemma.
    If the inversion was found after sampling some of the elements, we actively choose one of the endpoints uniformly at random.
    \riko{reviewer 2 comment $x$ belongs to $X_i$ - check this}
    If this randomly chosen element is~$x$, we get (1.) with the uniform distribution.
    Hence, for the second part, we should assume that $y$ is the minimum (by the choice of naming in the lemma) of a set~$S_B$ of sampled blue elements.
    Observe that $y \in S_B\cap B_x$, in fact $y = \min (S_B\cap B_x)$.
    Because $S_B$ is the result of uniform sampling, $y$ is biased in the direction expressed in the lemma.
\end{proof}
\subsection{Bounding the depth of the refinement tree}

\begin{definition}
    Let~$v$ be a node in the refinement tree with the pivots $r_v$ and $b_v$, and let $I_{r_vb_v}$ the list-subinstance of~$v$.
    Define $R_v=I_{r_vb_v} \cap R$, $B_v= I_{r_vb_v}\cap B$, and
    $P_v$ as the \emph{number of pairs} of~$v$ as %
    $P_v = |R_v|\cdot|B_v|$.
\end{definition}

\begin{lemma}\label{lem:productProgress}
    Let~$v$ be a non-leaf node of the refinement tree of bichromatic InversionSort and let $o,p,q$ be its children.
    Then, with probability at least 1/2, 
    \[
        \max_{w\in\{o,p,q\}} P_w \le \frac{7}{8} P_v
    \]
\end{lemma}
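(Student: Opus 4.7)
The plan is to apply Lemma~\ref{lem:uniform} to pin down the joint distribution of the inversion endpoints $(x,y)$ found between the pivots $r_v$ and $b_v$, and then to bound each of the three child products by a combination of a monotonicity argument and a tail estimate on the marginal of~$y$.

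By Lemma~\ref{lem:uniform}(3), one of the conditional distributions is uniform; by the red--blue symmetry of the subinstance of $v$, I may assume that conditional on $y$, the red endpoint $x$ is uniformly distributed on $R_y = R_v \cap (y, b_v)$. Conditioning on $y$ fixes $a_o = |R_v \cap [r_v, y]|$ and $b_o = |B_v \cap [r_v, y]|$, hence $P_o = a_o b_o$. Parameterize $x$ by its rank $j$ among reds greater than $y$, so $j$ is uniform on $\{1, \ldots, a_o'\}$ with $a_o' := A - a_o$. Writing $\beta(j)$ for the number of blues strictly between $y$ and the $j$-th red larger than $y$, the three child products are
\[
P_o = a_o b_o, \quad P_p(j) = j(1+\beta(j)), \quad P_q(j) = (a_o' - j + 1)(b_o' - \beta(j)),
\]
with $b_o' := B - b_o$. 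The key structural fact is that $\beta$ is non-decreasing, so $P_p$ is non-decreasing and $P_q$ is non-increasing in $j$.

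For the bulk of the analysis I would use the crude bounds $1 + \beta(j) \le B$ and $b_o' - \beta(j) \le B$, which give $P_p(j) \le jB$ and $P_q(j) \le (a_o' - j + 1)B$. Thus both $P_p(j), P_q(j) \le (7/8) P_v$ whenever $j$ lies in the window $[a_o' + 1 - (7/8)A,\, (7/8)A]$. A short case split (on whether $a_o \ge A/8$) shows that this window, intersected with $\{1,\ldots,a_o'\}$, always covers at least a $3/4$ fraction of the uniform $j$-range. Consequently, for any $y$ with $P_o \le (7/8) P_v$, the conditional probability of a bad split is at most $1/4$.

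The main obstacle is to show $\Pr[P_o > (7/8) P_v] \le 1/4$ over the marginal distribution of~$y$. An AM--GM calculation gives $P_o > (7/8) P_v \Rightarrow a_o/A + b_o/B > 2\sqrt{7/8} \approx 1.87$, so $y$ must lie in roughly the top $13\%$ of blues in the subinstance. To bound the marginal of this event, I would invoke Lemma~\ref{lem:uniform}(2): the conditional distribution of $y$ given $x$ is uniform or biased toward small $y$, and summing over $x$ shows the marginal of $y$ is itself non-increasing. A standard tail bound then gives $\Pr[P_o > (7/8) P_v] \le 0.13 + o(1) < 1/4$. Combining this with the conditional bulk bound yields $\Pr[\max_w P_w > (7/8) P_v] \le 1/2$, as desired. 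Small cases (where $A$ or $B$ is below an absolute constant) are handled by direct calculation; the symmetric case of Lemma~\ref{lem:uniform}(3) (in which $y$ is uniform given $x$) is treated identically by swapping the roles of red and blue throughout.
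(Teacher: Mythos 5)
Your proof is correct and follows essentially the same strategy as the paper's: invoke Lemma~\ref{lem:uniform}(3) to obtain one uniform conditional endpoint, use it to bound the two children that endpoint separates (failure probability at most $1/8$ each), use the monotone bias of the other endpoint's marginal to bound the remaining outer child, and finish with a union bound. The only differences are cosmetic: you chose the mirror-image w.l.o.g.\ case (uniform $x$ given $y$ rather than uniform $y$ given $x$, so the ``third child'' you handle separately is $o$ rather than $q$), and the $\beta(j)$-monotonicity and AM--GM steps are superfluous once you settle for the crude bounds $|B_p|,|B_q|\le B$ and the direct implication $a_ob_o>\tfrac{7}{8}AB\Rightarrow b_o>\tfrac{7}{8}B$.
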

\begin{proof}
    Let $b\in B$, $r\in R$ be the inversion  and assume w.l.o.g $r_v<b<r<b_v$, 
    and call the three children of~$v$ as $o,p,q$ with 
    $r_o=r_v$, $b_o=b_p=b$, $r_p=r_q=r$ and $b_q=b_v$. See Figure~\ref{fig:stripes2}.
    W.l.o.g., assume $b$ is uniformly chosen given~$r$, but $r$ has a bias towards making the outer child~$q$ smaller, i.e., since $r$ is a max, $|\P_{q}|$ is smaller than if $r$ were uniform.
    By Lemma~\ref{lem:uniform}, we can conclude that with probability at least 3/4,
    $\max(|B_o|,|B_p|)\le 7/8(|B_o|+|B_p|) \le 7/8 |B_v|$, and $|R_q| \le 3/4 |R_v|$ with probability at least 3/4.
    By a union bound, we have both statements with probability at least 1/2,
    and in the following we assume they both hold.
    
    From the first statement follows (even if $R_o=R_v$ or $R_p=R_v$) that $P_o \le 7/8 P_v$ and $P_p\le 7/8 P_v$.
    From the second statement follows $P_q \le 3/4 P_v$ (even if $B_q=B_v$).
\end{proof}

With these tools in our hands, we can go back to
\thmheight*
\begin{proof}
    Let~$x$ be an element of~$\I$ and consider the root-to-leaf path~$P$ of~$\T$ that consists of the nodes~$v$ that contain~$x$ in~$\I_v$.
    Let $Q\subset P$ be the set of nodes where the number of pairs is reduced by a factor 7/8, as in Lemma~\ref{lem:productProgress}.
    Then $|Q|=\Theta(\log N)$ and by a Chernoff bound\riko{rev2: see details}, with probability at least $1-N^{-2}$, $|P| = O(|Q|)$.
    A union bound over all~$x$ gives the statement of the theorem.
\end{proof}

\subsection{Pivoting Cost Analysis}

Recall that InversionSort always pivots with the newly found inversion, and in doing so, it compares all elements in a bucket $X_{i}$ to at most two new pivots. 
Even though InversionSort does not proceed layer-by-layer (in fact, different vertices of the same layer could have wildly varying times of InversionSort finding inversions in them and refining them into children; see Figure~\ref{fig:tree}), we can still upper bound the total pivoting cost performed on a layer by $O(n+m)$. Using Theorem~\ref{thm:height}, we get that

\begin{theorem}[Pivoting Cost of InversionSort]\label{thm:pivot}
    Let $N=n+m$. With probability at least $1-1/N$, the pivoting cost of InversionSort on an instance of size $N$ is at most $O(N \log N)$.
\end{theorem}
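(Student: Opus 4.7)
My plan is to charge every pivoting comparison to an internal node of the refinement tree, argue that each element contributes at most $O(1)$ comparisons per layer, and conclude via the $O(\log N)$ depth bound of Theorem~\ref{thm:height}.

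First, I would observe that every pivoting comparison is caused by the two new pivots $y,x$ that some internal node $v$ of the refinement tree introduces when its inversion is found: these are exactly the pivots used to refine the two buckets of $v$'s subproblem. As noted in Section~\ref{sec:bichromatic}, each element in a bucket is compared to exactly one new pivot (the one of the opposite color to the bucket), so the pivoting cost charged to~$v$ equals the sum of the sizes of $v$'s blue and red buckets at the moment of $v$'s inversion.

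The key step is a per-element, per-layer bound. Fix an element~$z$ and let $H(z)$ be the history of~$z$, i.e., the set of internal nodes~$v$ such that $z$ lies in $v$'s bucket at the moment of $v$'s inversion. I would show that for every layer~$\ell$, the number of layer-$\ell$ nodes in $H(z)$ is $O(1)$. The crucial invariant is that $z$ lies in exactly one bucket of its own color at every moment, and each such bucket is shared by exactly two adjacent subproblems on the backbone; hence the set $S_t(z)$ of current subproblems containing~$z$'s bucket always satisfies $|S_t(z)|\le 2$. When some $v\in S_t(z)$ finds its inversion, $v$ is replaced in $S_t(z)$ by at most two of its children, depending on whether $z$ lands in a bucket shared between two children of~$v$ or between a single child of~$v$ and the sibling-subproblem already in $S_t(z)$. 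A short case analysis then shows that at most a constant number of layer-$\ell$ nodes ever enter $S_t(z)$: one is the unique layer-$\ell$ ancestor of~$z$'s final stripe (which lies on $z$'s root-to-leaf path $P_z$), and any others must be backbone-neighbors feeding~$z$ via spill-in, whose count is controlled by the $|S_t|\le 2$ invariant.

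Summing, the total pivoting cost is at most $\sum_z |H(z)|\cdot O(1)=O(N\cdot\text{depth}(\T))$. By Theorem~\ref{thm:height}, the depth is $O(\log N)$ with probability at least $1-N^{-2}$ for any fixed element, and a union bound over the $N$ elements yields the claimed $O(N\log N)$ bound with probability at least $1-1/N$. I expect the main obstacle to be the rigorous case analysis in the per-layer bound: one must trace how $S_t(z)$ is updated when a spill-in neighbor of $z$'s path is split, and verify that no single layer accumulates more than a constant number of history nodes for~$z$. The $|S_t|\le 2$ invariant together with the three-child structure of refinement splits are the principal tools for controlling this.
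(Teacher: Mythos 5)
Your overall plan --- charge each pivoting comparison to the internal node whose inversion causes it, bound the per-element charge per layer by $O(1)$, and multiply by the $O(\log N)$ height from Theorem~\ref{thm:height} --- is the same decomposition the paper uses (the paper itself only asserts the per-layer $O(n+m)$ bound in two sentences). The gap is precisely in the step you flag as the main obstacle. The invariant $|S_t(z)|\le 2$ controls how many subproblems contain $z$'s bucket \emph{simultaneously}; it does not control how many distinct layer-$\ell$ nodes occupy the second slot \emph{over time}. Concretely, write $S_t(z)=\{v_t,w_t\}$, where $v_t$ is the unique current node whose list subinstance contains $z$ (this slot walks down $z$'s root-to-leaf path and indeed contributes at most one node per layer), and $w_t$ is the neighboring subproblem for which $z$ is spill-in. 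Each split of $w_t$ replaces it by its child adjacent to the shared pivot, so its depth increases by one per split; but whenever $v_t$ splits and $z$ lands strictly between the two new pivots, the $w$-slot is \emph{reset} to a fresh child of $v_t$ at depth $d(v_t)+1$, which can be much shallower than the node it replaces. After each of the up to $h$ resets, the $w$-slot can descend through layer $\ell$ again, so $z$ can be charged by up to $\Theta(\min(\ell,h))$ distinct layer-$\ell$ nodes; summing over layers, your accounting only yields $O(h^2)=O(\log^2 N)$ comparisons per element, i.e., $O(N\log^2 N)$ total, not $O(N\log N)$.

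To close this you need an argument beyond the $|S_t(z)|\le 2$ invariant: for instance, a global (rather than per-element, per-layer) bound on the total spill-in cost, exploiting that each descent of the $w$-slot traverses the outermost spine of a subtree disjoint from those of other descents --- or you should state the weaker $O(N\log^2 N)$ bound, which still supports a polylog competitive ratio but not the theorem as written. Be aware that the paper's own justification does not resolve this point either, so the issue is not that you missed an ingredient the paper supplies; rather, your write-up should not present the $O(1)$-per-layer claim as a consequence of the two-slot invariant alone.
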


\subsection{Inversion Finding Cost Analysis}\label{sec:inversioncost}

What remains to be analysed is the cost, in number of comparisons, that is incurred to find the inversions, i.e., pairs of new pivots, as described in Section~\ref{sec:inversionSearch}

For every stage of InversionSort, the current backbone corresponds to a left-to-right traversal of the current subtree (including the the root) of the refinement tree.

Recalling Definition~\ref{def:listsubinstance}, such an inversion search is performed for each subproblem, we search for an inversion in the list subinstance, and, at least initially or if monochromatic comparisons are expensive or disallowed, the stripe subinstance is the structure the algorithm is faced with.
Note that elements of the subproblem that are not in the subinstance make inversion finding more difficult because such elements cannot form an inversion, but the algorithm cannot distinguish them from elements in the subinstance.
We say that such elements in the bucket of one pivot \emph{spill in} to the subproblem.
This spill-in is the shaded part in Figure~\ref{fig:stripes1}.
\begin{definition}
    [unaffected subproblem]\label{def:unaffected}
    Let $r,b$ be two neighboring pivots on the backbone in some round of inversion search of InversionSort.
    Then the subproblem $r,b$ is \emph{unaffected} if the number of red elements in the stripe subinstance of $r,b$ is at least 1/4 of the number of red elements in the subproblem of $r,b$ and the number of blue elements in the stripe subinstance of $r,b$ is at least 1/4 the number of blue elements in the subproblem of $r,b$.
\end{definition}

Note that in the course of running InversionSort, by refining the backbone, subproblems can turn from affected to unaffected, but not vice versa.

Observe that neighboring subinstances share the stripe of the pivot; this stripe does not constitute spill-in.
Still, over a single pivot, there typically is spill-in of the same color in both directions, but only one of them can affect the receiving subproblem:
$R_k$ consists of the red elements in the two neighboring subproblems, $S_i\subset R_k\cup\{r_k\}$ \riko{rev 2 i is k? no, I don't think so} is part of both, so one of the two sets of spilling-in elements must be less than half of~$R_k$, so clearly not both of them can be more than 3/4 of $R_k$.

\subsection{Constant local competitiveness of inversion finding}
\label{sec:localCompetitive}
In this section, we will argue that inversion finding for unaffected subinstances is justified by the cost of the Hamiltonian in the subinstance.
In Section~\ref{sec:inversionSearch}, we already argued that non-successful inversion searches that lead to a certificate that there are no further inversions are justified by parts of the Hamiltonian. 
Here, we (also) consider successful searches, and relate their cost to the cost of the Hamiltonian between and including the stripes of the representatives InversionSort has as neighbors on the backbone.

The proofs of the following Lemmas and Theorems are given in Appendix~\ref{apx:om}.

\begin{restatable}{lemma}{lemlocalcomp}
    [local competitiveness of inversion finding]\label{lem:localComp}
    Let $r,b$ be two neighboring pivots of InversionSort, and let $t$ be the number of rounds (up to a constant the cost of comparisons) until InversionSort finds an inversion or concludes that there is none, starting from the first round when the subproblem of $r,b$ is unaffected.
    Let $H_{r,b}$ be the cost of the Hamiltonian of the stripe subinstance between the stripe of $r$ and the stripe of $b$.
    Then $\mathbb{E}[t]=O(H_{r,b})$, and $\text{Pr}[t \leq \mathbb{E}[t]+t'] \geq 1-(1/8)^{t'}.$
\end{restatable}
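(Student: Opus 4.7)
The plan is to bound $t$ by splitting on whether the stripe subinstance contains any inversion, and in each regime exploit the unaffected assumption to relate the algorithm's costs to $H_{r,b}$. Throughout, use the amortized balancing of Section~\ref{sec:inversionSearch} to replace comparison cost by round count, since each round contributes $O(1)$ amortized comparisons (one bichromatic, $1/\alpha$ amortized red-red, $1/\beta$ amortized blue-blue). Write $R_s, B_s$ for the red/blue element counts in the stripe subinstance, $A',B'$ for the subproblem bucket sizes, $|S_r|,|S_b|$ for the sizes of the two endpoint stripes, and $T_{\mathrm{cert}}$ for the cheapest-certificate threshold the algorithm uses as a backstop. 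The unaffected assumption gives $A' \le 4R_s$ and $B' \le 4B_s$.

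If the subinstance contains no inversion, it is exactly the two stripes of $r$ and~$b$, so $R_s=|S_r|,\;B_s=|S_b|$, and $H_{r,b}=\alpha(|S_r|-1)+\beta(|S_b|-1)+1$. Random sampling can never succeed, so $t\le 2T_{\mathrm{cert}}$ deterministically by the algorithm's cap; and $T_{\mathrm{cert}} \le \min(\alpha(A'-1)+B',\;\beta(B'-1)+A') = O(\alpha R_s+\beta B_s)=O(H_{r,b})$ using $\alpha,\beta\ge 1$. If the subinstance has $I\ge 1$ inversions, each round finds an inversion with probability $\Omega(I/(A'B'))=\Omega(I/(R_s B_s))$. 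The crux is the structural claim
\[
 \frac{R_s B_s}{I} = O(H_{r,b}).
\]
To prove it, let the intermediate region consist of $K$ stripes (alternating blue/red, $K$ even), let $R_{\mathrm{int}},B_{\mathrm{int}}$ be the intermediate red/blue element counts with $R_{\mathrm{int}},B_{\mathrm{int}}\ge K/2$. A direct count exploiting that intermediate stripes alternate and start with blue yields $I\ge s_1\cdot R_{\mathrm{int}} + s_K\cdot B_{\mathrm{int}} \ge \max(R_{\mathrm{int}},B_{\mathrm{int}})$. Meanwhile $H_{r,b}\ge \alpha(|S_r|-1)+\beta(|S_b|-1)+K+1$ and $H_{r,b}\ge R_s+B_s-1$. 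Expanding $R_s B_s = |S_r||S_b|+|S_r|B_{\mathrm{int}}+R_{\mathrm{int}}|S_b|+R_{\mathrm{int}}B_{\mathrm{int}}$, I would bound each summand in a short case analysis using the three lower bounds on $H_{r,b}$ together with $I\ge\max(R_{\mathrm{int}},B_{\mathrm{int}})$, establishing the claim and hence $\mathbb{E}[t]=O(H_{r,b})$.

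For the tail bound $\text{Pr}[t\le \mathbb{E}[t]+t']\ge 1-(1/8)^{t'}$, observe that each round performs four bichromatic tests using $x_s,x_q$ and the accumulated max/min samples (Section~\ref{sec:inversionSearch}). Once the samples have grown past a small constant, these tests amplify the per-round success probability; combined with the deterministic certificate that caps $t$ at $2T_{\mathrm{cert}}$, the number of additional rounds beyond $\mathbb{E}[t]$ is stochastically dominated by a geometric random variable with parameter $7/8$, which gives the stated $(1/8)^{t'}$ tail.

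The main obstacle is the structural claim $R_s B_s/I=O(H_{r,b})$ in the regime where $|S_r|,|S_b|$ are small but the intermediate region has many size-$1$ stripes (so $K$ is large, $R_{\mathrm{int}}=B_{\mathrm{int}}=K/2$). Here $H_{r,b}=\Theta(K)$ is dominated by bichromatic transitions, while $T_{\mathrm{cert}}$ can be $\Theta(\min(\alpha,\beta)K)$ or $\Theta(K^2)$, far exceeding $H_{r,b}$, so the certificate route is too expensive. The argument must rely on the fact that precisely in this regime $I=\Theta(K^2)$, making $R_s B_s/I=\Theta(1)$ so that random sampling succeeds in constant expected rounds. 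Balancing this trade-off tightly across all stripe configurations (including mixed regimes where $|S_r|,|S_b|$ and $R_{\mathrm{int}},B_{\mathrm{int}}$ are of comparable size) is the technical heart of the proof.
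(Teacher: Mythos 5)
Your overall architecture (split on whether the subinstance has an inversion, use the certificate cap in one branch and per-round success probability in the other) is reasonable, and your no-inversion branch is essentially fine. But the $I\ge 1$ branch hinges on the structural claim $R_sB_s/I=O(H_{r,b})$, and that claim is false. Take $|S_r|=|S_b|=n$, an intermediate region consisting of one blue and one red singleton stripe (so $I=1$, $R_{\mathrm{int}}=B_{\mathrm{int}}=1$), and $\alpha=\beta=2$. Then $R_sB_s/I=\Theta(n^2)$ while $H_{r,b}=\alpha(n-1)+\beta(n-1)+3=\Theta(n)$. Your three lower bounds on $H_{r,b}$ together with $I\ge\max(R_{\mathrm{int}},B_{\mathrm{int}})$ cannot control the term $|S_r|\,|S_b|/I$ here, so the "short case analysis" you defer cannot exist. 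You also misidentify the hard regime: the many-singleton-stripes case you flag is actually benign (there $I=\Theta(R_{\mathrm{int}}B_{\mathrm{int}})$ and uniform bichromatic sampling succeeds in $O(1)$ expected rounds); the case that kills the claim is large endpoint stripes with few inversions.

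What your proof is missing is precisely the mechanism the paper's proof is built around: the algorithm does not only sample uniform bichromatic pairs, it also spends its amortized monochromatic budget growing max/min samples, and the analysis must charge those rounds to the monochromatic edges of the Hamiltonian. The paper's case split is on whether the Hamiltonian restricted to the subinstance has many ($h_r\ge n_r/4$) or few red--red edges, and likewise for blue. Many red--red edges means $H_{r,b}\ge\alpha n_r/4$, which pays for exhaustively sampling the reds and locating the extreme red element, after which a single bichromatic test per round finds an inversion with constant probability; few red--red edges force more than $3n_r/4$ red stripes, hence a constant fraction of reds into the inversion-forming half, so uniform sampling succeeds with constant probability. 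In my counterexample this is exactly what saves the day: $h_r=n-1\ge n_r/4$, the $\Theta(\alpha n)$ rounds of red sampling are paid by the $\alpha(n-1)$ term of $H_{r,b}$, and the extreme red immediately yields the inversion. To repair your proof you would need to replace the structural claim by something like $\mathbb{E}[t]=O(\min(T_{\mathrm{cert}},\,R_sB_s/I))$ and then run the paper's density case analysis to show this minimum is $O(H_{r,b})$; as written, your expectation bound has a genuine gap of up to a $\Theta(N)$ factor. (Your tail-bound paragraph is also only a sketch -- asserting per-round success probability $7/8$ "once the samples have grown" -- but the decisive flaw is the structural claim.)
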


\subsubsection*{Overall cost of inversion finding in unaffected subproblems}

Observe that each layer in the refinement tree almost partitions the original instance into subinstances: Only the buckets of the pivots on the backbone are counted twice.
In other words, if we take all subinstances of the same polarity (every other) of one level of the refinement tree, they are disjoint, and the sum of the cost of the Hamiltonian in the subinstances is at most the cost of the Hamiltonian. 
By Theorem~\ref{thm:height} we can hence bound the total cost of inversion search / finding in unaffected subproblems to be only a $O(\log N)$ factor higher than the cost of the Hamiltonian.

\begin{restatable}{theorem}{thmtotalUnaffected}\label{thm:totalUnaffected}
    Let $u$ be the total cost of comparisons that InversionSort performed in unaffected subproblems when running on instance~$\I$ of size~$N$, and let $H_{\I}$ be the cost of the Hamiltonian of~$\I$.
    Then, with probability at least $1-1/N$, $u=O(H_{\I} \log N)$.
\end{restatable}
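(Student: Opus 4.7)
The plan is to combine the per-subproblem bounds of Lemma~\ref{lem:localComp} with the layer-partition observation stated just before the theorem and the height bound of Theorem~\ref{thm:height}. For each node $v$ of the refinement tree~$\T$ whose subproblem becomes unaffected at some round, let $t_v$ denote the number of inversion-finding rounds InversionSort spends on~$v$ from the first round when it is unaffected until it is resolved. Lemma~\ref{lem:localComp} gives $\EXP[t_v] = O(H_{r_v,b_v})$ and $\Pr[t_v > \EXP[t_v] + t'] \le (1/8)^{t'}$. Setting $t' = c \log N$ for a sufficiently large constant $c$ and union bounding over the $O(N)$ nodes of the ternary refinement tree (whose leaves are in bijection with the stripes of~$\I$), with probability at least $1 - 1/N^2$ every unaffected $v$ simultaneously satisfies $t_v = O(H_{r_v,b_v}) + O(\log N)$.

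Next I would bound $u = \sum_v t_v \le \sum_v O(H_{r_v,b_v}) + |V(\T)| \cdot O(\log N)$ in two pieces. For the first sum I group terms by tree layer: at a single layer the stripe subinstances are disjoint apart from the shared pivot stripes, so $\sum_{v \text{ at layer }\ell} H_{r_v,b_v} = O(H_\I)$. Conditioning on the $1 - 1/N^2$ event from Theorem~\ref{thm:height} that the height of~$\T$ is $O(\log N)$ and summing across layers yields $\sum_v H_{r_v,b_v} = O(H_\I \log N)$. For the second piece, $|V(\T)| = O(N)$, and because $\alpha,\beta \ge 1$ every one of the $N-1$ Hamiltonian edges has cost at least~$1$, so $H_\I \ge N - 1$; hence $O(N \log N) = O(H_\I \log N)$ is absorbed. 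A final union bound over the height and concentration events gives $u = O(H_\I \log N)$ with probability at least $1 - 1/N$ after choosing the constant $c$ large enough.

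The main obstacle I anticipate is making the layer-disjointness argument rigorous: the nodes at a given layer of~$\T$ arise at different algorithmic moments, so one must verify that their stripe subinstances, viewed as intervals on the final sorted sequence, are pairwise almost-disjoint, with the only overlap confined to the stripes of shared pivots, and that this overlap inflates the per-layer sum only by a constant factor (splitting the layer into the two polarity classes if necessary). A secondary subtlety is to confirm that the tail estimate in Lemma~\ref{lem:localComp} is a bound on rounds and hence, by the amortized round construction of Section~\ref{sec:inversionSearch}, a bound on comparison cost up to a constant; this justifies passing from $t_v$ to actual comparison cost without losing factors.
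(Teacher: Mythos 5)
Your proposal is correct and follows essentially the same route as the paper's (much terser) proof: sum the expected setup costs $O(H_{r_v,b_v})$ layer by layer using the almost-disjointness of same-polarity subinstances within a layer and the $O(\log N)$ height bound, and absorb the $O(N\log N)$ total of the additive geometric-tail terms over the $O(N)$ tree nodes using $H_\I\ge N-1$. The subtleties you flag (polarity splitting for the layer-disjointness, and rounds versus comparison cost) are exactly the points the paper relies on implicitly, so no further changes are needed.
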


\subsection{Many unaffected subproblems}\label{sec:manyUnaffected}

What remains to be shown is that there are sufficiently many unaffected subproblems, such that the progress made there, by the previous section, justifies the cost of all performed comparisons.
The setup is that we consider some current state of InversionSort as defined by one round of inversion search in all subproblems. 
We will show that for each unaffected subproblem there are at most $O((\log N)^2)$ affected ones (creating the same cost, but not necessarily making progress).

A middle child with its three children currently being leafs is easy to analyse because the inversion that defines the boundaries between the children is described by Lemma~\ref{lem:uniform}.
We will use the following lemma on partial refinement trees that describe any backbone InversionSort might work with at some time.
\begin{restatable}{lemma}{lemmiddlechild}
    \label{lem:middleChild}
    Let~$T$ be a (subtree of a) ternary tree of height~$h$ and let $L=(v_1,\ldots,v_k)$ be part of the left-to-right leaf-traversal of~$T$ of length~$k\ge 4(h+2)$.
    Then $L$ must contain a middle child whose 3 children are leaves.
\end{restatable}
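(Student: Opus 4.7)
My plan is to prove the lemma by induction on the height~$h$ of~$T$. The base cases $h \le 2$ are vacuous: a ternary tree of height~$h$ has at most $3^{h}$ leaves, which is strictly smaller than $4(h+2)$ for these small~$h$, so the hypothesis $k \ge 4(h+2)$ cannot be satisfied.

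For the inductive step, let $r$ be the root of $T$ with three children $c_1, c_2, c_3$ (left, middle, right), and write $L_i$ for the restriction of $L$ to the leaves of the subtree rooted at~$c_i$. Because $L$ is contiguous in the left-to-right leaf-traversal of~$T$, the nonempty $L_i$'s form a contiguous block among $\{1,2,3\}$. I then split into three cases. (i) If $L$ sits entirely inside a single subtree $c_i$, I apply the inductive hypothesis to $c_i$'s subtree (of height at most~$h-1$), noting that $|L| \ge 4(h+2) > 4((h-1)+2)$; the resulting middle-child-with-three-leaf-children inside $c_i$ is also one in~$T$. (ii) If $L$ spans all three subtrees, then $L$ contains every leaf of $c_2$'s subtree. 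If moreover $c_2$ has three leaf children, then $c_2$ itself is the desired node (it is the middle child of~$r$ and its three leaf children lie in $L$) and we are done; otherwise $c_2$ has at least one non-leaf child, and we try to recurse inside $c_2$'s subtree. (iii) If $L$ spans exactly two adjacent subtrees, $L_i$ is a suffix of $c_i$'s leaves and $L_{i+1}$ is a prefix of $c_{i+1}$'s leaves, and we try to apply induction on the larger of the two pieces.

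The main obstacle will be closing the induction in the boundary situations of cases~(ii) and~(iii), where no single piece $L_i$ alone reaches the $4((h-1)+2) = 4(h+1)$ threshold required by the hypothesis. The inductive slack from $4(h+2)$ down to $4(h+1)$ is only four leaves, so a naive pigeonhole on the three pieces just misses. Resolving this likely requires either a more refined invariant---for example, inducting on the depth of the LCA of $v_1$ and $v_k$ in~$T$ rather than only on~$h$---or an explicit structural argument along the two root-to-leaf paths from this LCA to $v_1$ and to $v_k$. The maximal subtrees hanging off these two paths and contained entirely in $L$ partition the interior of $L$, and a careful amortized count of their leaf-counts level by level, together with the observation that at each internal level at most two maximal enclosed subtrees hang off the path, should force at least one such subtree either to reach the inductive threshold inside (so the hypothesis gives a middle-child-BIN there) or itself to be rooted at a middle child with three leaf children.
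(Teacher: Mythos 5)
Your write-up is a plan with a hole that you yourself identify, and the hole is real: the induction on the height~$h$ with a three-way split at the root does not close. In the straddling cases the two (or three) pieces $L_i$ can each have length about $k/2\approx 2h+4$, which is far below the inductive threshold $4((h-1)+2)=4h+4$; this is not a near miss by four leaves but a miss by roughly a factor of two, and it compounds at every level, so no constant adjustment of the threshold $4(h+2)$ rescues this decomposition. Case~(ii) has the same defect: if $c_2$ is internal but its subtree has few leaves, $L_2$ need not meet the threshold, and even a \emph{fully} contained ternary subtree need not contain a \emph{middle} child all of whose children are leaves (consider a height-two tree whose only internal child of the root is the left one), so ``recurse into $c_2$'' is not automatic. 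Your second suggested repair --- decomposing along the two root-to-leaf paths from the LCA of $v_1$ and $v_k$ and amortizing over the maximal fully-enclosed subtrees hanging off them, at most two per level --- is the germ of a workable argument, but you have not carried it out, and the delicate point (upgrading ``some node with three leaf children'' to ``some \emph{middle} child with three leaf children'') is precisely what it would have to address.

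For comparison, the paper's proof is not inductive at all; it is a direct counting argument on the depth sequence of $v_1,\dots,v_k$. Either some five consecutive leaves of $L$ have equal depth, in which case three of them must be siblings and their common parent is the desired node; or every maximal constant-depth run has length at most four, and then $k\ge 4(h+2)$ forces enough runs that one can locate a local maximum of the depth sequence whose first leaf is a left child, whose two right siblings therefore exist and, since no deeper leaves follow them in the traversal, must themselves be leaves. To make your proposal into a proof you should either complete the boundary-path argument in full detail (including the middle-child refinement) or switch to a direct depth-counting argument of this kind.
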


Observe that finished pairs of neighboring pivots on the backbone, i.e., subproblems where InversionSort verified that there are no further inversions, don't spill out to their neighbors, and remember that over one pivot,  spilling can only be in one direction.

\begin{restatable}{lemma}{lemoneUnaffected}
    \label{lem:oneUnaffected}
    Let~$L$ be a list of consecutive subproblems on the backbone, and assume there is no spill-in from the left to the first or from the right to the last (because of finished subproblems or the instance ending).
    Then at least one of the subproblems must be unaffected.
\end{restatable}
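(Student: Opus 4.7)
The plan is to argue by contradiction using a pigeonhole/injection argument. Let the list be $L=(s_1,\ldots,s_n)$ with bounding pivots $u_0,u_1,\ldots,u_n$, so $s_j$ lies between $u_{j-1}$ and $u_j$, and suppose every $s_j$ is affected. By Definition~\ref{def:unaffected}, for each $j$ either the red spill-in to $s_j$ exceeds $3/4$ of its red bucket (call $s_j$ \emph{red-affected}) or the analogous statement holds in blue (\emph{blue-affected}). To each affected $s_j$ I assign a \emph{bad pivot}: the red pivot in $\{u_{j-1},u_j\}$ if $s_j$ is red-affected, otherwise the blue pivot in $\{u_{j-1},u_j\}$.

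The workhorse of the proof is the observation -- already sketched informally in the paragraph preceding the lemma -- that each pivot can be the bad pivot of at most one of its two adjacent subproblems. For a red pivot $r$ with bucket $R_r$, unpacking Definitions~\ref{def:backbone} and~\ref{def:listsubinstance} identifies the red spill-in to the right-adjacent subproblem with $\{x\in R_r:x<r,\ x\notin\mathrm{stripe}(r)\}$ and the red spill-in to the left-adjacent subproblem with $\{x\in R_r:x>r,\ x\notin\mathrm{stripe}(r)\}$. These are disjoint subsets of $R_r$, so their sizes sum to at most $|R_r|$, and at most one of them can exceed $3|R_r|/4$. The same holds for any blue pivot. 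Since red and blue pivots are distinguishable by color, the combined map $s_j\mapsto\text{bad pivot of }s_j$ is injective: two red-affected subproblems cannot share a bad red pivot by the disjointness just shown, two blue-affected ones cannot share a bad blue pivot by the same reason, and a red-affected and a blue-affected one assign pivots of different colors.

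The hypothesis of the lemma enters only in the final step, to excise $u_0$ and $u_n$ from the image of the map. The only subproblem of $L$ adjacent to $u_0$ is $s_1$, and by assumption the spill-in from $u_0$ into $s_1$ is zero, so $u_0$ cannot be the bad pivot of anything in $L$; symmetrically for $u_n$ and $s_n$. The injection therefore lands inside $\{u_1,\ldots,u_{n-1}\}$, a set of size $n-1$, contradicting $|L|=n$, so at least one $s_j$ must be unaffected. I do not anticipate a real obstacle here: the only genuinely substantive step is the spill-set identity at a single pivot, which is a direct unpacking of Definitions~\ref{def:backbone} and~\ref{def:listsubinstance}; the remaining mild subtlety is reading the boundary hypothesis as saying that the spill from $u_0$ into $s_1$ (and from $u_n$ into $s_n$) is \emph{exactly} zero, which is what the parenthetical clause of the lemma guarantees.
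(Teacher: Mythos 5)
Your proof is correct and is essentially the paper's own argument: the paper also observes (in the paragraph preceding the lemma) that the two spill-in sets over a single pivot are disjoint subsets of that pivot's bucket, so at most one of the two adjacent subproblems can receive more than $3/4$ of its elements as spill-in over that pivot, and then concludes by pigeonhole on the $|L|-1$ internal pivots versus $|L|$ subproblems. You have merely made the pigeonhole explicit as an injection from affected subproblems to internal pivots, which is a fine (and slightly more careful) way to write the same proof.
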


\begin{restatable}{lemma}{lemnoLongWO}[Not too many affected subproblems]
    \label{lem:noLongWO}
    Let~$a$ be the number of active subproblems on the backbone at some stage of InversionSort and let~$u$ be the number of the unaffected subproblems.
    There exists a constant~$c$  
    such that $\text{Pr}[a \le c u  (\log N)^2] \geq 1-1/N^3$.
\end{restatable}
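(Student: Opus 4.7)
The plan is to combine the tree-structural guarantee of Lemma~\ref{lem:middleChild} with the randomness of Lemma~\ref{lem:uniform} and a Chernoff bound. First, using a slightly strengthened Theorem~\ref{thm:height} — with the Chernoff parameter in its proof tuned so that $h=O(\log N)$ holds with probability at least $1-1/(2N^3)$ — I would condition on the current partial refinement tree having height $h=O(\log N)$. Listing the $a$ active subproblems (the leaves of the partial tree) in left-to-right order, I would partition them into $\lfloor a/(4(h+2))\rfloor=\Theta(a/\log N)$ consecutive chunks of size $4(h+2)$ each, discarding any leftover at the end. By Lemma~\ref{lem:middleChild}, every such chunk contains an internal node $m$ that (i) is itself a middle child in the refinement tree and (ii) has all three of its children lying as leaves in that same chunk; denote these three children $o'_m$, $p'_m$ (the grand-middle), and $q'_m$. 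Picking one such $m$ per chunk yields a set $M$ with $|M|=\Theta(a/\log N)$; because the subtrees rooted at distinct $m\in M$ are disjoint, the random bits driving their defining inversions are mutually independent.

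Next, I would show that for each $m\in M$ the grand-middle $p'_m$ is unaffected with probability at least some absolute constant $p_0>0$. Fix $m\in M$ with pivots $y_m,x_m$ and let $(y',x')$ be the inversion found inside $m$ (i.e., the pivots of $p'_m$). Since $o'_m$ and $q'_m$ are leaves — hence have not been further subdivided — the four consecutive backbone pivots around $p'_m$ are exactly $y_m,y',x',x_m$, so both buckets defining $p'_m$'s subproblem lie entirely inside $m$'s subinstance $[y_m,x_m]$. Lemma~\ref{lem:uniform}(3) guarantees that at least one of $y',x'$ is drawn uniformly from the appropriate range within $m$'s subinstance, while the other is either uniform or biased toward the outer boundary, i.e., toward $y_m$ or $x_m$ — a direction that only widens $(y',x')$. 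A short case analysis in each of the two colors then yields that with probability at least $p_0$, both the reds in $(y',x')$ and the blues in $(y',x')$ constitute at least a quarter of $y'$'s and $x'$'s buckets respectively, which is exactly the definition of $p'_m$ being unaffected.

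Because the unaffectedness indicators are mutually independent across $m\in M$, a Chernoff bound yields that at least $p_0|M|/2$ of the grand-middles are unaffected, with failure probability at most $\exp(-\Omega(|M|))$. The argument then splits into two cases. If $|M|\ge c_1\log N$ for a sufficiently large constant $c_1$, this failure probability is at most $1/(2N^3)$, giving $u\ge p_0|M|/2=\Omega(a/\log N)$ and hence $a=O(u\log N)\le O(u(\log N)^2)$. Otherwise $|M|<c_1\log N$, so $a<4(h+2)\,c_1\log N=O((\log N)^2)$, and the desired inequality holds trivially because $u\ge 1$ — which follows by applying Lemma~\ref{lem:oneUnaffected} to the whole active backbone, whose two ends are non-spilling boundaries thanks to the artificial smallest-red and largest-blue elements. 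A final union bound over the height event and the Chernoff event gives overall failure probability at most $1/N^3$.

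The hard part will be the case analysis that converts the one-sided uniform-or-biased-to-outside guarantee of Lemma~\ref{lem:uniform} into a constant-probability statement of unaffectedness in \emph{both} colors simultaneously. One must verify that the ``outside'' bias is indeed beneficial (it widens $(y',x')$ and thereby enlarges $p'_m$'s subinstance relative to its buckets), handle the asymmetric conditioning between the uniform endpoint and the possibly biased one across the red and blue counts independently, and check that the factor-$4$ slack built into the definition of ``unaffected'' supplies enough margin; the remaining tree-theoretic and concentration steps are comparatively routine.
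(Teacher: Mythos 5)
Your proposal is correct in its essentials and shares the paper's skeleton: extract $\Theta(a/\log N)$ nodes whose three children are leaves via Lemma~\ref{lem:middleChild}, invoke the randomness of Lemma~\ref{lem:uniform}, finish with a Chernoff bound, and fall back on Lemma~\ref{lem:oneUnaffected} when $a=O((\log N)^2)$. However, the step that converts randomness into unaffected subproblems is genuinely different. The paper never certifies a particular subproblem as unaffected: for each extracted node it observes that the uniformly chosen new pivot causes large spill to the left or to the right each with probability at most $1/4$, encodes the sequence of spill directions along the backbone, and argues via an alternation count plus the pigeonhole argument behind Lemma~\ref{lem:oneUnaffected} that every maximal run of same-direction spills contains an unaffected subproblem somewhere. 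You instead pin down a concrete witness, the grand-middle child $p'_m$, and show it is unaffected with probability at least $1/2$ by a union bound over the two colors: since $o'_m$ and $q'_m$ are leaves, both buckets of $p'_m$ lie inside $(y_m,x_m)$; the uniform endpoint avoids its bad quarter with probability at least $3/4$; and the outward bias of the other endpoint only enlarges the subinstance relative to the buckets. Your route is more direct and avoids the paper's somewhat delicate direction-encoding bookkeeping, at the price of the two-color case analysis you flag as the hard part (which does go through, by exactly the union-bound-over-conditionings argument you sketch). Two points to tighten: (i) do the chunking inside each maximal run of consecutive \emph{active} subproblems rather than over all active subproblems globally -- finished leaves interleaved on the backbone would otherwise break the consecutive-leaves hypothesis of Lemma~\ref{lem:middleChild} and could make a chosen $p'_m$ inactive; short runs are then handled by Lemma~\ref{lem:oneUnaffected} directly, which is precisely the paper's opening reduction to individual lists $L$; and (ii) the independence across $m\in M$ should be stated as conditional on the realized tree structure, a caveat the paper's own Chernoff step also needs. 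Note that both arguments actually yield the stronger bound $a=O(u\log N)$ in the large-$a$ case; the $(\log N)^2$ in the statement is forced only by the small-$a$ case where one falls back on $u\ge 1$.
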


\subsection{Putting everything together: Proof of Theorem~\ref{thm:bichromaticinversionsort}}

With these ingredients, we are ready for a proof of

\thmbichromaticinversionsort*
\begin{proof}
    The cost of InversionSort consists of pivoting and inversion finding.
    Theorem~\ref{thm:pivot} shows that pivoting uses $O(n\log n)$ bichromatic comparisons.
    Observe $H_\I\ge n-1$, hence pivoting costs $O(H_\I\log N)$. 
    Let $v$ be the cost of inversion finding, and $u$ be the cost of such comparisons in unaffected subproblems.
    Lemma~\ref{lem:noLongWO} shows $v \le u (\log N)^2$ with probability $1-1/N$ (union bound over the at most $N^2$ many rounds) and by Theorem~\ref{thm:totalUnaffected} follows $u=O(H_{\I} \log N)$, showing the statement of the theorem.
\end{proof}
 
\newpage
\bibliographystyle{plainurl}

\newpage

\appendix

\noindent\textbf{\Huge{Appendices}}

\section{Variations of bichromatic costs}\label{sec:other_bichromatic}
\subsection{Bichromatic most expensive: $\alpha<1$ and $\beta<1$}

In this case, it is natural to first sort both colors with monochromatic comparisons, to then do binary merging using exponential searches.
It is easy to see, that this algorithm is within an~$O(\log N)$ factor of the cost of the Hamiltonian.

\subsection{Bichromatic middle expensive: $\alpha<1<\beta$ or $\beta<1<\alpha$}

This setting can be expressed as a monotone function in the framework of~\cite{gupta2001sorting}.

There is also a straight forward direct  algorithm:
W.l.o.g., assume $\alpha<1<\beta$.
Sort the red elements using  red-red comparisons at cost~$O(\alpha |R|\log |R|)$.
For each blue element, perform a binary search in these red, at total cost of~$O(|B|\log |R|) $.
Finally, sort the blue stripes with total cost $O(\beta \sum |B_i|\log |B_i|)$.
Each of the three terms is justified by the need to determine the rank of a certain type of elements, and the above procedure uses the cheapest available comparisons for this task.

\section{More than two colors}\label{sec:multipartite}
\begin{definition}
    [Multichromatic Sorting]\label{def:multichromatic}
    Multichromatic Sorting is universal sorting, where the elements are colored with~$k$ different colors. All bichromatic comparisons have cost~1, and for each color~$i$, the monochromatic comparisons have a cost~$\gamma_i>1$, where $\gamma_i=\infty$ means that these comparisons are forbidden and stripes of this color are reported as a whole.
\end{definition}

To solve the multichromatic setting, 
run a natural variant of (bichromatic) InversionSort:
Pick one random elements from each color, sort these, and take this as the backbone.
Consider the remaining elements in random order.
Place the element into the backbone using binary search. 
If this leads to a new element on the backbone (neighbors are of different colors), pivot the two current buckets.
This maintains the invariants of the backbone and that three consecutive elements on the backbone have three different colors.
Hence, all neighboring buckets are starting points of bi-chromatic problems, and we can use InversionSort on them.
Here we interleave the computation as if the whole backbone was one backbone in the bichromatic setting.
 
\section{Omitted Proofs}
\label{apx:om}

\lemlocalcomp*
\begin{proof}
    Let~$C$ be the number of rounds of comparisons for which subproblem~$r,b$ is active.
    Observe that InversionSort does on this subproblem $C$ bichromatic comparisons, and samples $\sim C/\alpha$ red elements and finds their max and min in $O(C)$ cost. 
    Similarly, the blue sample consists of $\sim C/\beta$ elements, incurring cost $O(C)$ for blue-blue comparisons.
    This is from the round in which $r,b$ are both on the backbone until either an inversion between~$r,b$ is found, or a certificate that there is no such inversion is established with comparisons.
    Hence, $C$ is asymptotically upper bounded by the cost of the cheapest proof, if there were no further inversions.
    We will show that these comparisons are justified by edges on the Hamiltonian, by case analysis.
    Assume the Hamiltonian on the subinstance between~$r$ and $b$ consists of $h_r$ red and $h_b$ blue edges.
    Let $n_r\ge 4$ and $n_b\ge 4$ the number of red and blue elements in the subinstance.
    Otherwise, if one of the colors has at most~4 elements, doing all bichromatic comparisons is $O(n_r+n_b) = O(\OPT)$.
    W.l.o.g. (for naming), assume that the red pivot is to the left and the blue is to the right, i.e., we are searching for an inversion blue is left of red.
    
    Observe that if there are at least $n_r/4\ge 1$ red-red edges on the Hamiltonian in the subinstance, i.e. $h_r\ge n_r/4$, the cost of finding the rightmost red element is justified, i.e., at the latest after doing comparisons for cost $\Theta( \alpha n_r) = O(\OPT)$, InversionSort has identified this element.
    Otherwise, there are at most $h_r < n_r/4$ red-red edges on the Hamiltonian. 
    If there were no red-red edges, there would be~$n_r$ red stripes, each red-red edge reduces this number by one, so there are more than $n_r-n_r/4=3/4 n_r$ (red) stripes, and a total of at least $3/4 n_r$ elements (because each stripe has at least one element) in stripes different from the leftmost stripe (possible inversions). 
    In this case, the right half of the (red) stripes must contain at least $3/8 n_r$ red elements.
    Similarly, if there are $n_b/4$ blue-blue edges on the Hamiltonian, finding the leftmost blue is justified, or the left half of the stripes contains at least $3/8 n_b$ blue elements. 
    
    Now consider the different cases.
    Assume $h_r \ge n_r/4$ red-red and $h_b \ge n_b/4$ blue-blue edge on the Hamiltonian in the subinstance.
    Then finding the rightmost red and the leftmost blue is justified, and InversionSort finds these two elements and hence an inversion (or a proof that there is none) in cost at most $3(\alpha n_a+\beta n_b) =O(\OPT)$. 
    
    Assume $h_r \ge n_r/4$ red-red and $h_b < n_b/4$ blue-blue edges on the Hamiltonian.
    Then, after comparisons for cost $\Theta(\alpha n_r) = O(\OPT)$, InversionSort has identified the rightmost red element~$r_m$, and there must be at least $3/4 b$ blue elements left of~$r_m$.
    From then on, the probability of InversionSort finding an inversion in one round is at least $3/4 > 1/8$.
    The symmetric argument holds if the role of red and blue is interchanged.
    
    The last case is $h_r < n_r/4$ red-red and $h_b < n_b/4$ blue-blue edges on the Hamiltonian.
    Then the probability of finding an inversion that crosses the middle is $(3/8)^2$, and InversionSort finds an inversion with expected $O(1)$ cost ($O(\log N)$ whp\riko{rev 2 what does it mean}) (relying only on bichromatic inversion search).

    Hence, at the latest after a setup of cost $H_{r,b}$, the probability of finding an inversion is at least $(3/8)^2 = 9/8^2 > 1/8 $.
\end{proof}

\thmtotalUnaffected*
\begin{proof}
    The refinement tree is a tree with at most~$N$ leaves and hence at most $N-1$ internal nodes, and the additive terms from these Bernoulli experiments total to $O(N\log N)$.
    As argued in the preceding paragraph, the setup cost of each of the $O(\log N)$ layers of the refinement tree are $O(H_{\I})$ each.
\end{proof}

\lemmiddlechild*
\begin{proof}
    If there are 5 consecutive vertices in~$L$ of the same depths, 3 of them must be the children of the same vertex, showing the lemma.
    Otherwise, no such 5 consecutive vertices exist.
    Then, by the bound on~$k$, there must be at most 4 vertices (internal, i.e. not starting with~$v_1$) that are a local maximum with depths~$D$, i.e., the preceding and following element in~$L$ have smaller depths.
    The first vertex with depths~$D$ must be the left child of some node~$v$, and it is a leaf.
    Hence, its two right siblings exist, and because the next vertices on~$L$ have depths~$\le D$, the siblings cannot have children. 
    This shows the statement of the lemma.
\end{proof}

\lemoneUnaffected*
\begin{proof}
    There are $|L|-1$ internal pivots over which a spill-in into subproblems could happen, but there are $|L|$ subproblems.
    By the pigeon-hole principle, there must be a subproblem without spill-in, i.e., unaffected.
\end{proof}

\lemnoLongWO*
\begin{proof}
    Let~$L$ be a list of consecutive active subproblems on the backbone. 
    If we can show the statement of the lemma for each individual such list, the lemma follows.
    
    If $a=|L|\le c (\log N)^2$, the claim follows from Lemma~\ref{lem:oneUnaffected} because there is at least one unaffected subproblem.
    
    Otherwise let $h=O(\log N)$ be the height of the refinement tree.
    By Lemma~\ref{lem:middleChild}, there are at least $a'=a/(4h+2)$ many subproblems corresponding to a vertex whose children are leafs.
    Now, by Lemma~\ref{lem:uniform}, one of the boundaries is chosen uniformly at random, and its probability of having spill-over to the left is 1/4, and so is for spill-over to the right.
    Now encode, from left to right, these spill-overs by a 1-bit if it is different from the previous.
    Let~$k$ be the number of 1s in this sequence, and observe that the number of ones is a lower bound on the number of subsequences with the same direction of spill-over. 
    Every other of these subsequences must have an unaffected subproblem.
    $\EXP[k]=a'/2 \geq c (\log N)^2) /h \ge 2\log N$ if the constant~$c$ is sufficiently large.
    Hence, by a Chernoff bound, with probability~$1-N^3$, \riko{rev 2: show details}
    we get $u\ge a'/4 \ge a/(12h) \gg a/c (\log N)^2$
    and $u\ge a/c (\log N)^2$ for sufficiently large~$c$.
\end{proof}

\end{document}